\newcommand{\lsignature}{\mathbf{\lambda}}
\newcommand{\surf}{\mathbb{S}}
\newcommand{\R}{\mathbb{R}}
\newcommand{\Z}{\mathbb{Z}}
\newcommand{\conv}{\operatorname{conv}}
\newcommand{\onevec}{\mathbf{1}}
\newcommand{\zerovec}{\mathbf{0}}
\newcommand{\cc}{\mathcal{C}}
\newcommand{\barcs}{\ensuremath{b}}
\newcommand{\fav}{\ensuremath{\eta}}
\DeclarePairedDelimiterX{\Set}[1]\{\}{\setargs{#1}}
\NewDocumentCommand{\setargs}{>{\SplitArgument{1}{|}}m}{\setargsaux#1}
\NewDocumentCommand{\setargsaux}{mm}{\IfNoValueTF{#2}{#1}{\nonscript\,#1\nonscript\;\delimsize\vert\allowbreak \nonscript\:\mathopen{}#2\nonscript\,}}
\newtheorem{theorem}{Theorem}
\newtheorem{proposition}[theorem]{Proposition}
\newtheorem{corollary}[theorem]{Corollary}
\newtheorem{lemma}[theorem]{Lemma}
\theoremstyle{definition}
\newtheorem{prob}[theorem]{Problem}
\title{Minimum-cost integer circulations in given homology classes}
\date{}
\author[1]{Sarah Morell}
\author[2]{Ina Seidel}
\author[2]{Stefan Weltge}
\affil[1]{Technische Universität Berlin, Germany}
\affil[1]{{\small \texttt{morell@math.tu-berlin.de}}}
\affil[2]{Technical University of Munich, Germany}
\affil[2]{{\small \texttt{\{ina.seidel,weltge\}@tum.de}}}
\let\@@pmod\pmod
\DeclareRobustCommand{\pmod}{\@ifstar\@pmods\@@pmod}
\def\@pmods#1{\mkern4mu({\operator@font mod}\mkern 6mu#1)}
\begin{document}

\maketitle

\begin{abstract}
	Let~$D$ be a directed graph cellularly embedded in a surface together with non-negative cost on its arcs. Given any integer circulation in~$D$, we study the problem of finding a minimum-cost non-negative integer circulation in~$D$ that is homologous over the integers to the given circulation. 
    A special case of this problem arises in recent work on the stable set problem for graphs with bounded odd cycle packing number, in which the surface is non-orientable (Conforti et al., SODA'20).

	For \emph{orientable} surfaces, polynomial-time algorithms have been obtained for different variants of this problem. We complement these results by showing that the convex hull of feasible solutions has a very simple polyhedral description.

    In contrast, only little seems to be known about the case of \emph{non-orientable} surfaces. 
    We show that the problem is strongly NP-hard for general non-orientable surfaces, and give the first polynomial-time algorithm for surfaces of fixed genus.
	For the latter, we provide a characterization of~$\Z$-homology that allows us to recast the problem as a special integer program, which can be efficiently solved using proximity results and dynamic programming.
\end{abstract}

\section{Introduction}

Finding optimal subgraphs of a surface-embedded graph that satisfy certain topological properties is a basic subject in topological graph theory and an important ingredient in many algorithms, see, e.g.,~\cite[\S 1]{Erickson2011}.
Motivated by recent work of Conforti, Fiorini, Joret, Huynh, and the third author~\cite{ConfortiFHJW2019,ConfortiFHW2019}, we study a variant of the minimum-cost circulation problem with such an additional topological constraint. In~\cite{ConfortiFHJW2019,ConfortiFHW2019}, it was crucially exploited that the stable set problem for graphs with bounded genus and bounded odd cycle packing number can be efficiently reduced to the below-stated Problem~\ref{probMain}. While the standard minimum-cost circulation problem is among the most-studied problems in combinatorial optimization, much less seems to be known about this version.
\begin{prob}
    \label{probMain}
    Given a directed graph~$D$ cellularly embedded in a surface together with non-negative cost $c$ on its arcs and any integer circulation~$y$ in~$D$, find a minimum-cost non-negative integer circulation in~$D$ that is~$\Z$-homologous to~$y$.
\end{prob}
Here, a circulation~$x$ is said to be \emph{$ \Z$-homologous to~$y$} if their difference~$x - y~$is a linear combination of facial circulations with integer coefficients, where a \emph{facial circulation} is a circulation that sends one unit along the boundary of a single face, see Figure~\ref{fig:torus}. If~$x - y$ is a linear combination of facial circulations with real coefficients, we say that~$x$ is \emph{$ \R$-homologous} to~$y$. As an example, if~$y$ is the all-zeros circulation, then~$y$ itself is clearly an optimal solution to Problem~\ref{probMain}. However, for general~$y$ the all-zeros circulation might not be feasible. In fact, if the surface is different from the sphere and the projective plane, there are actually infinitely many homology classes, and their characterization is a basic subject in algebraic topology. We will provide more formal definitions in Section~\ref{sec:SurfacesHomology}.

\begin{figure}
    \centering
    \begin{minipage}[c]{0.3\textwidth}
        \includegraphics[width=\textwidth]{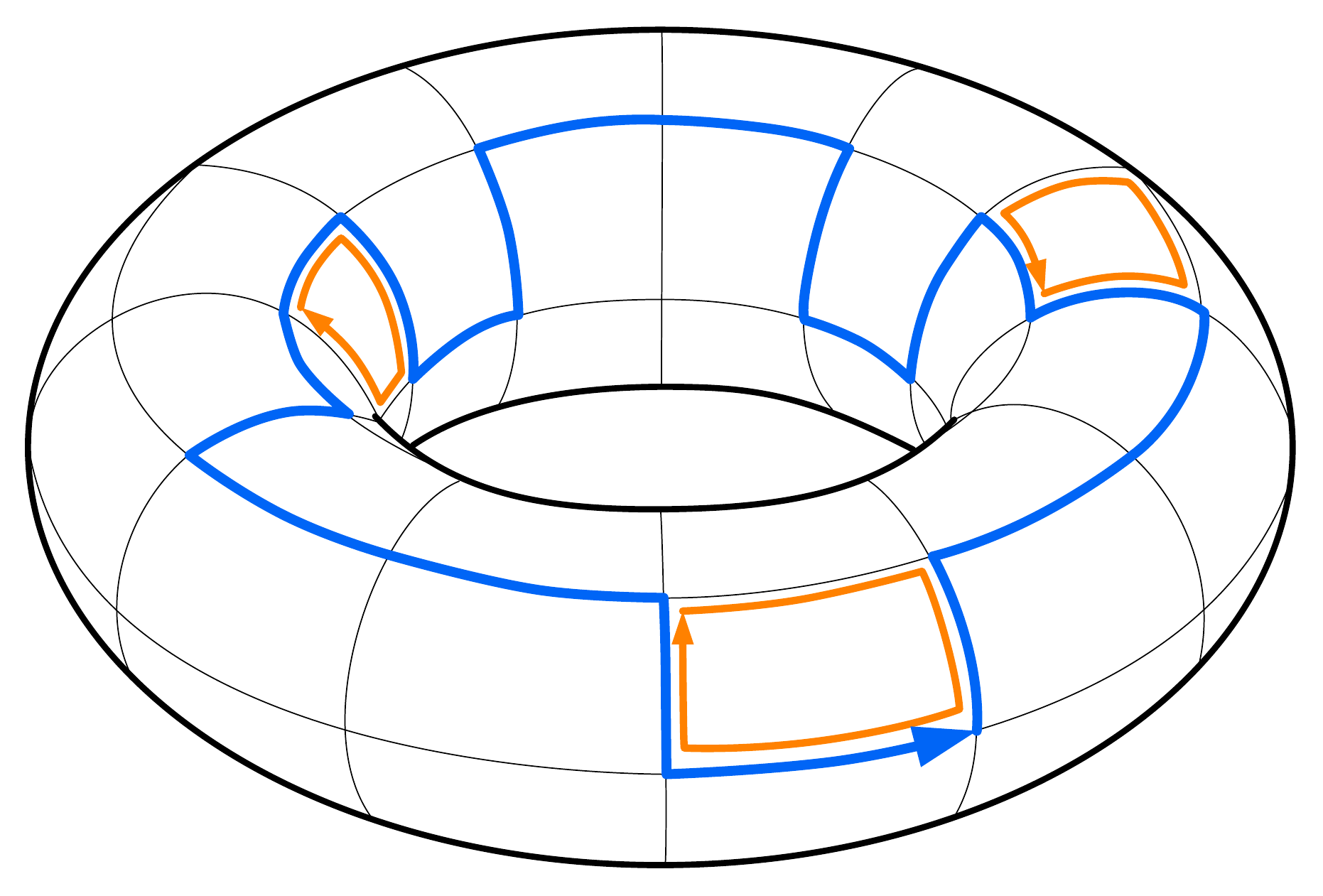}
    \end{minipage}
    \hspace{1cm}
    \begin{minipage}[c]{0.3\textwidth}
        \includegraphics[width=\textwidth]{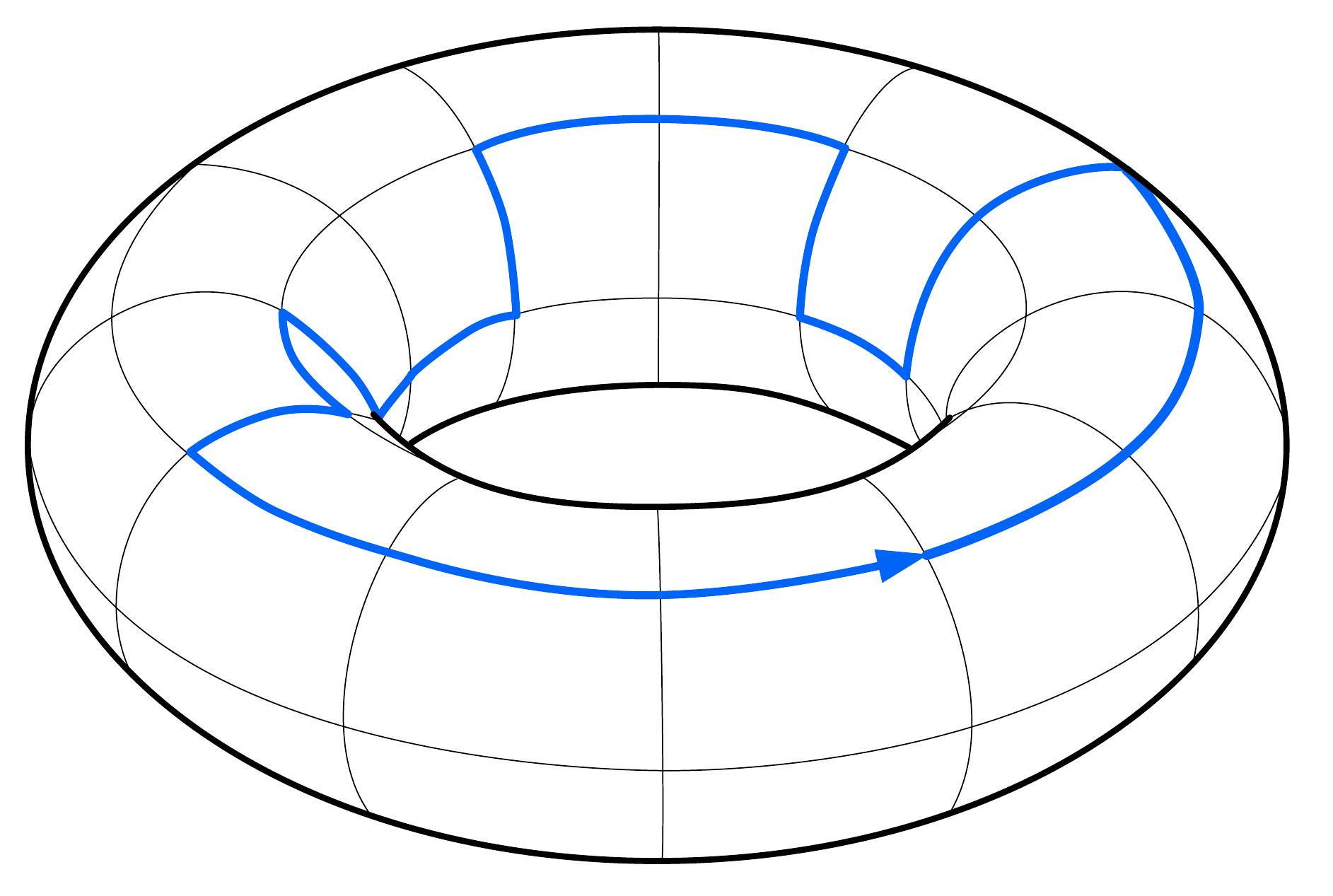}
    \end{minipage}
    \caption{The circulations that send one unit along the blue directed cycles on the torus are~$\Z$-homologous. In fact, their difference is the sum of three facial circulations which are depicted in orange.}
    \label{fig:torus}
\end{figure}

In this work, we introduce this problem to the combinatorial optimization community, with a particular emphasis on the case in which the surface is \emph{non-orientable}. While we complement existing results for orientable surfaces and show that the underlying polyhedra are actually easy to describe, only little seems to be known in the case of non-orientable surfaces. Our main result is a polynomial-time algorithm for non-orientable surfaces of fixed genus. Moreover, we show that the problem becomes NP-hard for general non-orientable surfaces.

For the case of \emph{orientable} surfaces, Chambers, Erickson, and Nayyeri~\cite{ChambersEN2012} show that Problem~\ref{probMain} can be solved in polynomial time. Their approach is based on an exponential-size linear program that can be solved using the ellipsoid method in near-linear time, provided that the surface has small genus. Dey, Hirani, and Krishnamoorthy~\cite{DeyHK2011} consider a variant of Problem~\ref{probMain} defined on simplicial complexes of arbitrary dimension, in which the (weighted)~$\ell_1$-norm of a chain homologous to~$y$ is to be minimized. For the case of an orientable surface, they derive a polynomial-time algorithm that is based on a linear program defined by a totally unimodular matrix. 

We complement these results by showing that the convex hull of feasible solutions to Problem~\ref{probMain} has a very simple polyhedral description. To this end, notice that Problem~\ref{probMain} asks for minimizing a linear objective over the convex hull of all non-negative integer circulations in~$D$ that are~$\Z$-homologous to~$y$. We will denote this polyhedron by~$P(D, y)$. Moreover, let~$P(D)$ be the convex hull of non-negative integer circulations in~$D$, which (as a network-flow polyhedron) has a simple linear description. Notice that any integer circulation~$x$ that is~$\Z$-homologous to~$y$ must also be~$\R$-homologous to~$y$. In other words,~$x$ must be contained in the affine subspace of all circulations that are~$\R$-homologous to~$y$, which we denote by~$L(D, y)$. Surprisingly, it turns out that it suffices to add the equations defining~$L(D, y)$ to a description of~$P(D)$ in order to obtain one for~$P(D, y)$.
\begin{theorem}
\label{thm:descriptionOfP(D,y)}
    Let~$D$ be a graph that is cellularly embedded in an orientable surface and let~$y$ be any integer circulation in~$D$. Then,~$P(D, y) = P(D) \cap L(D, y)$.
\end{theorem}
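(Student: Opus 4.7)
The inclusion $P(D,y) \subseteq P(D) \cap L(D,y)$ is immediate, since $\Z$-homology implies $\R$-homology. For the reverse inclusion, my plan is to establish two facts: (i) $P(D) \cap L(D,y)$ is an integer polyhedron, and (ii) every integer point of it is already $\Z$-homologous to $y$. Since $P(D,y)$ is by definition the convex hull of its integer points, together these give the equality. Both parts rest on the same structural consequence of orientability.

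The structural ingredient I would exploit is the following. Fix a consistent orientation of the surface and let $B \in \{-1,0,+1\}^{A \times F}$ be the matrix whose columns are the facial circulations. Because the surface is orientable, the two faces bordering any arc $a$ traverse $a$ in opposite directions, so each row of $B$ has exactly one $+1$ and one $-1$. Thus $B$ equals the transpose of the node-arc incidence matrix of the dual graph $D^\ast$, which is connected since $D$ is cellularly embedded; in particular $B$ is totally unimodular. I would then parameterize
\[
    P(D) \cap L(D,y) = \{\, y + B\alpha : \alpha \in \R^F,\ B\alpha \geq -y \,\},
\]
noting that the flow-conservation constraint on $x = y + B\alpha$ is automatic because $y$ and every column of $B$ are circulations. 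As $B$ is totally unimodular and $y$ is integer, $\{\alpha : B\alpha \geq -y\}$ has integer vertices and extreme rays, and the integer affine map $\alpha \mapsto y + B\alpha$ transports these to integer vertices and extreme rays of $P(D) \cap L(D,y)$, which settles (i).

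For (ii), given an integer $x \in P(D) \cap L(D,y)$, I must show that the integer vector $z := x - y$ lies in $B\Z^F$ and not merely in the real image of $B$. This is once more a consequence of $B$ being, up to transposition, the node-arc incidence matrix of the connected graph $D^\ast$: for such a matrix, a standard flow/cut-space argument shows that every integer vector in its real image is already in its integer image. (Topologically, this is the torsion-freeness of $H_1$ on an orientable surface.) I expect the only conceptual step of the whole proof to be spotting this parameterization; once $B$ is recognized as a dual-graph incidence matrix, both (i) and (ii) fall out of standard facts. Orientability is used precisely to obtain the sign pattern of the rows of $B$, which is the feature that fails in the non-orientable setting.
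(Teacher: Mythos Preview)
Your proposal is correct and follows essentially the same route as the paper: both hinge on the total unimodularity of the boundary matrix $\partial$ (your $B$), parameterize $P(D)\cap L(D,y)$ via $x=y+\partial\eta$ with $\partial\eta\ge -y$, and use TU to pass from real to integer $\eta$. The paper argues this in one stroke in $\eta$-space, which makes your step~(ii) redundant (an integer vertex of the $\eta$-polyhedron already witnesses $\Z$-homology); also note the minor caveat that an arc whose dual edge is a loop yields a zero row of $B$ rather than a $\pm1$ pair, but this does not affect total unimodularity.
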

We will provide an explicit description for~$L(D, y)$ later. Unfortunately, Theorem~\ref{thm:descriptionOfP(D,y)} does not hold for non-orientable surfaces. In fact, we show that Problem~\ref{probMain} becomes inherently more difficult on general non-orientable surfaces.
\begin{theorem}
    \label{thm:hard}
    Problem~\ref{probMain} is strongly NP-hard on general non-orientable surfaces.
\end{theorem}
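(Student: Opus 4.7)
The distinguishing feature of non-orientable surfaces is that the first integer homology group carries a $\Z/2$-torsion: for a non-orientable surface $S$ of genus $g$, one has $H_1(S;\Z) \cong \Z^{g-1} \oplus \Z/2$. As a consequence, an integer circulation may be $\R$-homologous to $y$ without being $\Z$-homologous to $y$, and the gap between $P(D,y)$ and $P(D) \cap L(D,y)$---empty in the orientable case by Theorem~\ref{thm:descriptionOfP(D,y)}---can now contain integer points. My plan is to exploit this gap to embed a strongly NP-hard problem into Problem~\ref{probMain}.

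The proposed reduction will be from a strongly NP-hard problem with a natural parity flavour, such as \textsc{Hamiltonian Cycle} in cubic graphs or a parity-constrained variant thereof. Given an instance of the source problem, we would construct a directed graph $D$, a non-negative cost function $c$, a reference circulation $y$, and a cellular embedding of $D$ into a non-orientable surface of genus polynomial in the input size. The reference circulation $y$ is chosen so that the non-negative integer circulations $\Z$-homologous to $y$ are in one-to-one correspondence with the feasible configurations of the source instance, and costs are calibrated so that a minimum-cost such circulation has value below a prescribed threshold if and only if the source instance is a \textsc{Yes}-instance.

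Concretely, I would use local variable gadgets built around crosscaps that offer a binary choice---a short directed cycle with the option to take a ``shortcut'' through a one-sided loop---and consistency gadgets that couple these binary choices using the $\Z^{g-1}$ free part of $H_1(S;\Z)$. Since all crosscaps of a single surface contribute jointly to the one $\Z/2$ torsion summand, independent binary choices cannot be encoded naively. The trick will be to amplify this single parity bit into many useful constraints by coupling crosscap gadgets with distinct integer homology cycles, so that each individual binary choice is detected by a different integer homology coordinate while only the global parity is a true torsion effect.

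The principal obstacle is precisely this amplification step: designing a cellular embedding whose facial relations produce exactly the desired combinatorial structure while ensuring that every face is a disk, that non-negativity of circulations rules out ``cheating'' through reverse arcs, and that the overall instance has polynomial size. Once the gadgets are in place, the correctness argument should reduce to a cost-accounting bookkeeping together with a careful verification that the $\Z$-homology constraint (rather than its $\R$-relaxation) is what enforces the intended combinatorial structure on feasible integer circulations, so that the minimum-cost circulation faithfully encodes the optimum of the source problem.
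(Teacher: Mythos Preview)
Your write-up is a plan, not a proof: you explicitly flag the ``amplification step'' as the principal obstacle and do not carry it out. That step is not a technicality---it is the entire content of the reduction, and as stated it has a conceptual tension you have not resolved. You propose that each binary choice be ``detected by a different integer homology coordinate'' (the $\Z^{g-1}$ free part), with only a global parity living in the $\Z/2$ summand. But constraints that are visible in the free part of $H_1$ are visible already over $\R$, i.e.\ in $L(D,y)$, and by Theorem~\ref{thm:descriptionOfP(D,y)} such constraints alone cannot make the problem hard (they would make the orientable version hard too). So your gadgetry must make the single torsion bit, together with non-negativity, do the combinatorial work; your sketch does not indicate how, and the Hamiltonian-cycle source problem has no evident link to that mechanism.

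For comparison, the paper's argument takes a completely different route and avoids local crosscap gadgets altogether. Given an arbitrary graph $G$, it equips $G$ with an embedding scheme in which \emph{every} edge has signature $-1$; this forces a highly non-orientable embedding of genus roughly $|E(G)|$. In the dual $G^*$, every facial walk then becomes a directed walk after a consistent orientation of the arcs, so $\onevec$ is a circulation---indeed $\onevec=\tfrac12\sum_{v^*\in F^*}\chi(v^*)$, and this factor $\tfrac12$ is precisely where non-orientability enters. One then checks that non-negative integer circulations $\Z$-homologous to $\onevec$ are exactly the vectors $\onevec-\sum_v x(v)\chi(v^*)$ with $x(v)+x(w)\le 1$ for every edge $\{v,w\}$, which is a stable-set-type system on $G$; hardness follows from a standard 3-SAT~$\to$~stable-set reduction with edge-induced weights. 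The moral is that the hardness is not obtained by amplifying one parity bit through clever gadgets, but by making the entire boundary matrix $\partial$ non-totally-unimodular in a uniform way via the all-$(-1)$ signature.
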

While the authors in~\cite{DunfieldH2011} show that variants of Problem~\ref{probMain} become NP-hard on~$3$-dimensional simplicial complexes, their approach does not seem to apply to surfaces. In fact, the reduction therein crucially relies on~$3$-dimensional gadgets, and equivalent~$2$-dimensional configurations are not obvious to us. 
We obtain a reduction from general~$3$-SAT instances, showing that Problem~\ref{probMain} is indeed (strongly) NP-hard.
To this end, we exploit ideas developed in~\cite{ConfortiFHJW2019} to reduce very particular instances of the stable set problem to Problem~\ref{probMain}.

On the positive side, we show that Problem~\ref{probMain} becomes tractable when dealing with non-orientable surfaces of \emph{fixed} genus:
\begin{theorem}
	\label{thm:existenceOfPolyAlgo}
	Problem~\ref{probMain} can be solved in polynomial time on non-orientable surfaces of fixed genus.
\end{theorem}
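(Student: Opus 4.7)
The plan is to reduce Problem~\ref{probMain} on a non-orientable surface~$\surf$ of non-orientable genus~$k$ (treated as constant) to a highly structured integer program, and to solve that program by combining a proximity bound with a dynamic program. The starting point is a clean characterization of~$\Z$-homology on~$\surf$. Since $H_1(\surf; \Z) \cong \Z^{k-1} \oplus \Z/2\Z$ while $H_1(\surf; \R) \cong \R^{k-1}$, the gap between $\R$- and $\Z$-homology is captured by a single $\Z/2\Z$-valued invariant. I would first prove that an integer circulation~$x$ is $\Z$-homologous to~$y$ if and only if (i)~$x \in L(D, y)$, which amounts to $k-1$ explicit linear equations $Ax = b_y$, and (ii)~an explicit integer functional satisfies $h^\top(x - y) \equiv 0 \pmod*{2}$. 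The vector~$h$ would be read off combinatorially from intersection numbers with a one-sided cycle representing the torsion class of $H_1(\surf; \Z)$.

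Plugging this characterization into Problem~\ref{probMain} recasts it as the integer program
\[
    \min \Set*{ c^\top x | Bx = \zerovec,\ Ax = b_y,\ h^\top x \equiv d_y \pmod*{2},\ x \in \Z_{\geq 0}^{E(D)} },
\]
where~$B$ is the vertex--arc incidence matrix of~$D$. Dropping the parity condition leaves a linear program over the intersection of a network-flow polytope with an affine subspace of constant codimension~$k-1$---very close in spirit to the orientable case resolved by Theorem~\ref{thm:descriptionOfP(D,y)}. I would then invoke a proximity theorem for integer programs whose constraint matrix is a network (hence totally unimodular) matrix augmented by $O(k)$ further rows, concluding that for a suitable optimal solution~$x^\star$ of the LP relaxation there exists an integer optimum~$x$ of the full IP with $\|x - x^\star\|_\infty \leq \Delta$ for some $\Delta = \Delta(k)$.

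Writing $x = x^\star + z$, the task then reduces to finding the cheapest integer residual~$z$ satisfying $Bz = \zerovec$, $Az = \zerovec$, $h^\top z \equiv \delta \pmod*{2}$, $z_e \geq -x^\star_e$, and $\|z\|_\infty \leq \Delta$. I would solve this via a dynamic program over a canonical polygonal schema of~$\surf$: processing the arcs in an order compatible with the schema and storing as state the partial vertex-flow imbalances (each bounded by~$\Delta$), the partial values of $Az$ (an integer vector in a $\Delta$-bounded box of dimension $k-1$), and the parity $h^\top z \bmod 2$. For fixed~$k$ this state space has polynomial size, and standard forward-backward recursion recovers the optimal~$z$. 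The main obstacle I anticipate is the proximity step: generic proximity bounds depend on subdeterminants of the full constraint matrix, which need not be controlled by~$k$ alone. My plan is to choose~$A$ and~$h$ as coming from a canonical system of $O(k)$ loops on~$\surf$, so that their entries lie in $\{-1, 0, 1\}$, and to exploit the total unimodularity of~$B$ together with the boundedness of the number of non-TU rows to pin~$\Delta$ down to a constant depending only on~$k$, which in turn keeps the dynamic program polynomial-time.
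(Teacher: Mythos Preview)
Your characterization of $\Z$-homology is exactly right and matches the paper's Theorem~\ref{thm:zHomologousNonOrientable}: $k-1$ linear equations plus one parity constraint. The overall philosophy---reformulate as a structured IP and solve by proximity plus dynamic programming---is also the paper's. But the way you instantiate both of the last two steps has real gaps.

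\textbf{Proximity.} You want $\|x-x^\star\|_\infty\le\Delta(k)$ with $\Delta$ depending only on the genus. The classical Cook--Gerards--Schrijver--Tardos bound gives $n\cdot\Delta_{\max}$, and the Eisenbrand--Weismantel bound gives $(m\Delta_{\max})^{O(m)}$ where $m$ is the number of equality rows. In your formulation $m=|V|+(k-1)$, so neither yields a constant in $k$. You flag this as the ``main obstacle'' and propose to choose $A,h$ with $\pm1$ entries, but boundedness of entries does not bound the subdeterminants of the stacked matrix $\begin{psmallmatrix}B\\A\end{psmallmatrix}$, and I do not know a proximity theorem of the form ``TU plus $O(k)$ extra rows implies $\Delta=\Delta(k)$''. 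This step is not established.

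\textbf{Dynamic program.} Even granting a constant $\Delta$, your DP stores ``partial vertex-flow imbalances'' as part of the state. Processing arcs in any order, the set of vertices with a pending (nonzero) imbalance is the current frontier, and for a graph embedded in a fixed surface this frontier has size $\Theta(\sqrt{n})$ in the worst case, not $O_k(1)$. The state space is then at least $(2\Delta+1)^{\Theta(\sqrt{n})}$, which is superpolynomial. The polygonal schema governs only $O(k)$ cut curves; it does not shrink the frontier of the embedded graph. So the DP as described is not polynomial.

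\textbf{What the paper does differently.} The paper never carries the flow-conservation constraints $Bx=\zerovec$ into the IP at all. Instead it decomposes any non-negative integer circulation into closed walks (which are circulations automatically), classifies each walk by its ``signature'' $(q,p)\in\Z^{k-1}\times\Z_2$, and observes that only $\mathrm{poly}(n)$ signatures can occur for walks whose partial $q$-values stay bounded by $2|V|$. A cheapest walk of each signature is found by a shortest-path computation in a layered auxiliary digraph. The resulting IP has one variable per signature and only $g$ equality constraints---the topological ones---so the number of rows is constant and Papadimitriou-type or Eisenbrand--Weismantel results apply directly. The key idea you are missing is precisely this: eliminate the $|V|$ vertex constraints \emph{before} invoking proximity, by working in the space of closed-walk multiplicities rather than arc flows.
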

A special case of Problem~\ref{probMain} was already treated and shown to be solvable in polynomial time in~\cite{ConfortiFHJW2019}, where only instances arising from very specific stable set problems were considered. Here, we consider the general problem.
The algorithm in~\cite{ConfortiFHJW2019} is based on an alternative characterization of~$\Z$-homology, which we have to replace by a more general one, see Theorem~\ref{thm:zHomologousNonOrientable}.
In fact, in the instances considered in~\cite{ConfortiFHJW2019} the orientation of the arcs of~$D$ is already determined by an embedding scheme (which we define later) of the dual graph, which we cannot assume here.
Moreover, it is exploited that, in their setting, optimal circulations can be found in~$\{0,1\}^A$, which is also not the case for a general instance of Problem~\ref{probMain}.
Using a bound of Malni\v{c} and Mohar~\cite{MM92} on the number of certain non-freely-homotopic disjoint closed curves, it is then shown that an optimal circulation can be decomposed into few disjoint closed walks that can be enumerated efficiently.
In this approach, the existence of optimal solutions with small entries is crucial to obtain a polynomial running time.

We propose another decomposition technique that enables us to reformulate Problem~\ref{probMain} as an integer program in standard form with a constant number of equality constraints, provided that the genus is fixed.
Applying results on the proximity of integer programs, the resulting problem can be efficiently solved using dynamic programming.
Our approach does not rely on further topological ingredients, in particular we do not require bounds from~\cite{MM92}.

\paragraph{Outline}
In Section~\ref{sec:SurfacesHomology}, we provide a brief introduction to surfaces and graph embeddings, focusing on the facts that are necessary for the following two sections. Our polynomial-time algorithm for Problem~\ref{probMain} on non-orientable surfaces with fixed genus is described in Section~\ref{sec:fixedgenus_polytimealg}. The proof of Theorem~\ref{thm:hard} is presented in Section~\ref{sec:hardness}. In Section~\ref{secCharHomology}, we present alternative characterizations of homology. More precisely, Section~\ref{secOrientable} is devoted to the case of orientable surfaces and contains a discussion of Theorem~\ref{thm:descriptionOfP(D,y)}. The non-orientable case is treated in Section~\ref{secNonOrientable}. Here, we provide a proof for a main ingredient (Theorem~\ref{thm:zHomologousNonOrientable}) of our algorithm.
We close our paper with a discussion of open problems in Section~\ref{sec:openquestions}.

\section{Surfaces and Homology}
\label{sec:SurfacesHomology}

We start with a \emph{brief} introduction to surfaces, graph embeddings and the concept of homology. Further details and illustrations will be provided in Section~\ref{secCharHomology}.

A \emph{surface} is a non-empty connected compact Hausdorff topological space in which each point has an open neighborhood which is homeomorphic to the open unit disc in the plane. Examples of such surfaces are the sphere, the torus, and the projective plane. While the first two are \emph{orientable} surfaces, the latter one is \emph{non-orientable}. Up to homeomorphism, each surface~$\surf$ can be characterized by a single non-negative integer called the \emph{Euler genus}~$g$ of~$\surf$ together with the information whether $\surf$ is orientable. If~$\surf$ is orientable, then~$g$ is even and~$\surf$ can be obtained from the sphere by deleting~$\sfrac{g}{2}$ pairs of open discs and, for each pair, identifying their boundaries in opposite directions (``gluing handles''). Otherwise,~$\surf$ is non-orientable and can be obtained from the sphere by deleting~$g \ge 1$ open discs and, for each disc, identifying the antipodal points on its boundary (``gluing Möbius bands''), see Figure~\ref{fig:kleinBottle} for an illustration.

\begin{figure}
    \centering
    \begin{minipage}[c]{0.2\textwidth}
        \includegraphics[width=\textwidth]{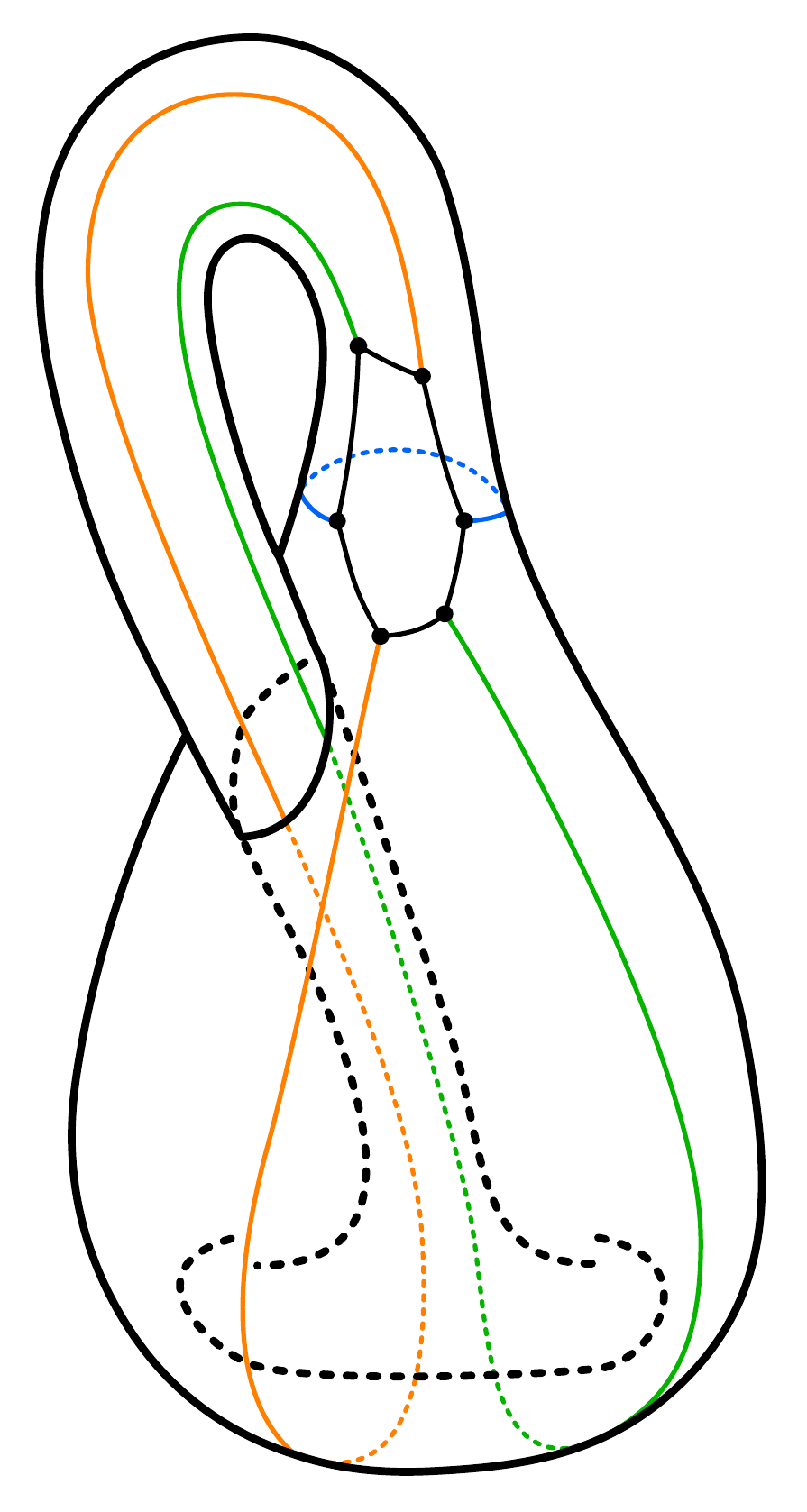}
    \end{minipage}
    \hspace{1cm}
    \begin{minipage}[c]{0.3\textwidth}
        \includegraphics[width=\textwidth]{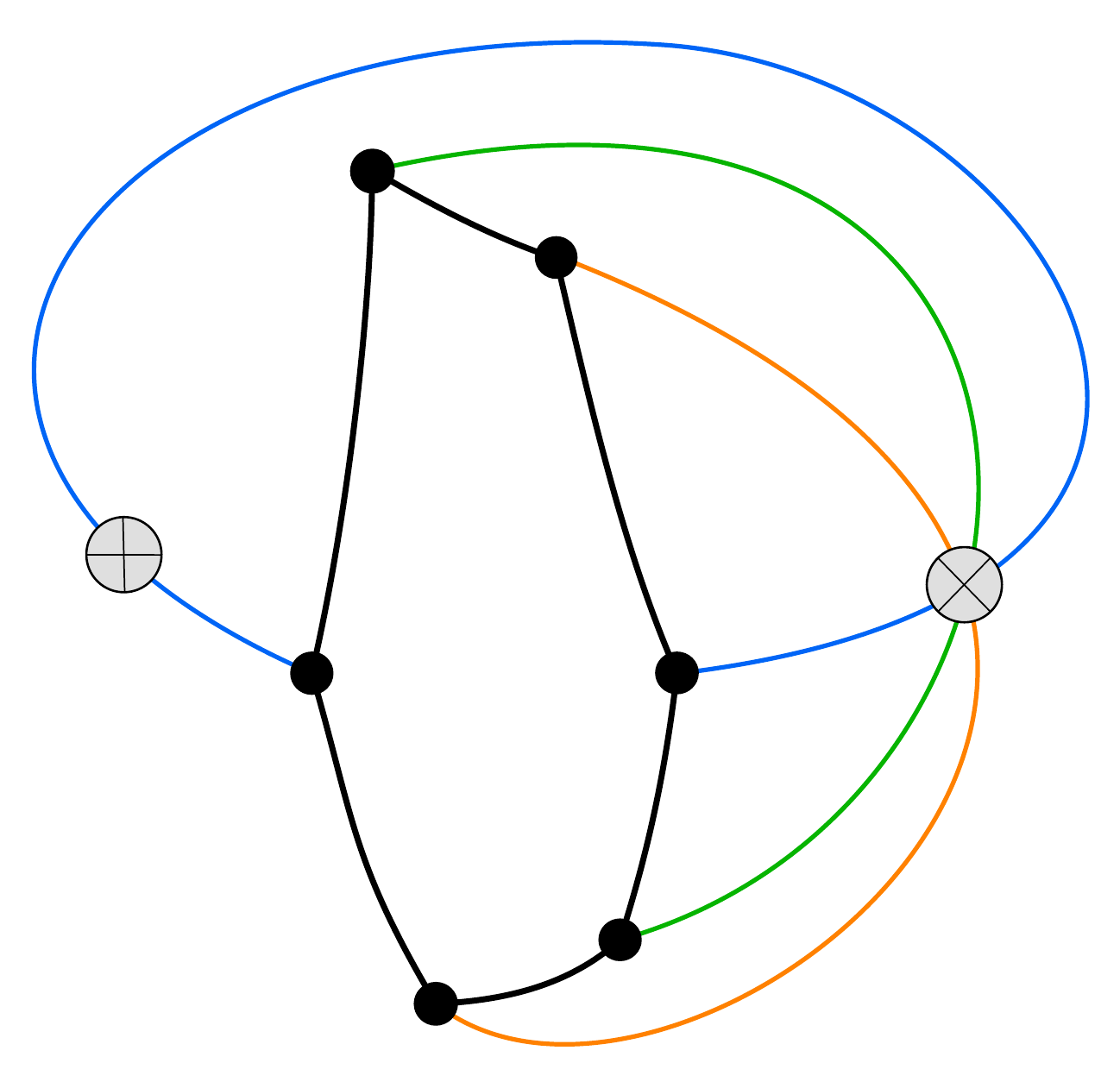}
    \end{minipage}
    \caption{A graph embedded in the \emph{Klein bottle}, the non-orientable surface of Euler genus~$2$. On the left, the surface is embedded in~$3$-dimensional space. Recall that the Klein bottle is obtained from the sphere by deleting two open discs and, for each disc, identifying the antipodal points on its boundary. On the right, an equivalent embedding of the same graph is shown, where these discs are depicted in gray.}
    \label{fig:kleinBottle}
\end{figure}

In the following, we consider (undirected and directed) graphs~$G = (V, E)$ embedded in a surface with non-crossing edges. We require that every face of the embedding is homeomorphic to an open disc, which is called a \emph{cellular} embedding.

Regardless of the (global) orientability of a surface, one can define a local orientation around each node~$v$ of~$G$. If the surface is orientable, these local orientations can be chosen in a way that they are consistent along each edge. In non-orientable surfaces, this is not possible. To keep track of these inconsistencies, one can represent any cellular embedding by an \emph{embedding scheme}~$\Pi = (\pi, \lsignature)$: The \emph{rotation system}~$\pi$ describes, for all nodes, a cyclic permutation of the edges around a node induced by the local orientation. The \emph{signature}~$\lsignature \in \{-1,+1\}^E$ indicates, for every edge, whether the two local orientations (clockwise vs. anti-clockwise) of the adjacent nodes agree ($ +1$) or not ($ -1$), see Figure~\ref{fig:facialWalks}. We assume that an embedded graph is always given together with such an embedding scheme. Conversely, given \emph{any} collection~$\pi$ of cyclic permutations of the edges incident to nodes and \emph{any} vector~$\lsignature \in \{-1,+1\}^E$, there exists a cellular embedding for which~$\Pi = (\pi, \lsignature)$ is a corresponding embedding scheme.

For each face of the embedding of~$G$, let us pick exactly one closed walk along the boundary of this face.
In this way, we obtain a collection of closed walks which we call~\emph{$\Pi$-facial walks} and denote by~$F$.
A more formal definition of~$F$ is provided in Section~\ref{secCharHomology}.
\emph{Euler's Formula} states
\begin{equation}
    \label{eq:Euler}
    \vert V \vert - \vert E \vert + \vert F \vert = 2 - g,
\end{equation}
where~$g$ is the \emph{Euler genus} of the surface. 

Given a graph~$G = (V,E)$ with embedding scheme~$\Pi = (\pi, \lsignature)$ and a set of~$\Pi$-facial walks~$F$, we define a \emph{dual graph}~$G^* = (V^*, E^*)$ as follows: Each node~$f^*$ of~$G^*$ corresponds to a~$\Pi$-facial walk~$f \in F$ of~$G$ and each edge $ e^* \in E^* $ corresponds to an edge $ e \in E $.
If the edge $ e $ is part of two~$\Pi$-facial walks~$f$ and~$g$ in~$G$, the dual edge $ e^* $ is incident to $f^*$ and~$g^*$.
As two $\Pi$-facial walks may share more than one edge and an edge might appear twice in the same $\Pi$-facial walk, the dual graph may have parallel edges and loops. 
If a graph is directed with arc set $ A $, we define the dual graph to be the dual graph of the underlying undirected graph and an arc that corresponds to the dual edge $ e^* $ is called $ e $.

We will equip the dual graph with a \emph{dual embedding scheme}~$\Pi^* = ( \pi^*, \lambda^* )$: the traversing directions of the~$\Pi$-facial walks in~$F$ directly correspond to the dual rotation system~$\pi^*$, and the signature~$\lambda^*(\cdot)$ of a dual edge is positive if the corresponding edge in~$G$ is used in opposite direction by the two corresponding~$\Pi$-facial walks, and negative otherwise. This dual embedding scheme defines an embedding in the same surface. 
The collection~$F^*$ of~$\Pi^*$-facial walks is chosen in a way that their walking directions correspond to the rotation system of~$G$.
An illustration of the dual embedding scheme and its relation to~$\Pi$-facial walks is given in Figure~\ref{fig:Dual}.

\begin{figure}
    \begin{center}
        \includegraphics[width=7cm]{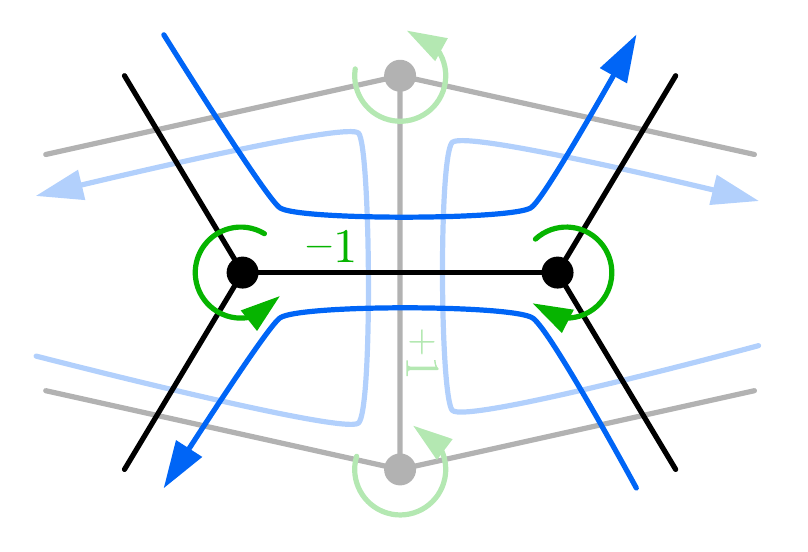}
        \caption{An extract of an embedded graph together with its dual graph (transparent). The embedding scheme and the associated dual embedding scheme are depicted in green. Corresponding facial walks are depicted in blue.}
        \label{fig:Dual}
    \end{center}
\end{figure}

Let~$D = (V, A)$ be a digraph with underlying undirected graph~$G = (V, E )$ and dual graph~$G^*$. 
For any walk~$W = (v_1, e_1, v_2, e_2, \dots, e_{\ell-1}, v_{\ell})$ in~$G$, we define the corresponding \emph{characteristic flow}~$\chi(W) \in \Z^A$ to be an assignment vector on the arcs of~$D$ indicating the total flow over the arcs when sending one unit of flow along~$W$.
This means that for~$(v,w) \in A$,~$\chi(W)((v,w))$ equals the number of appearance of the subsequence~$(v, \{v,w\}, w)$ in~$W$ minus the number of appearance of~$(w, \{v,w\}, v)$.

\subsection{Homology}
\label{subsec:homology}
Given a directed graph~$D = (V, A)$ cellularly embedded in~$\surf$, two integer circulations~$x, y \in \Z^A$ are said to be \emph{$ \Z$-homologous} if~$x - y = \sum_{f \in F} \fav_f \chi(f)$, where~$\fav_f \in \Z$ for each facial walk~$f \in F$.
Let~$\partial \in \Z^{A \times F}$ be the matrix whose columns are the vectors~$\chi(f)$,~$f \in F$. Problem~\ref{probMain} can then be reformulated as
\begin{equation}
    \label{eq:mainProblemIP}
    \min \Set*{ c^\intercal x \ : \ x = y + \partial \fav, \, x \ge \zerovec, \, x \in \Z^A, \, \fav \in \Z^F }.
\end{equation}
For the case of an orientable surface, it is easy to see that the matrix~$\partial$ is totally unimodular~\cite{DeyHK2011}, which implies that Problem~\ref{probMain} can be solved efficiently in this case.
Unfortunately,~$\partial$ is not totally unimodular whenever the surface is non-orientable.

\section{A polynomial-time algorithm on non-orientable surfaces with fixed genus}
\label{sec:fixedgenus_polytimealg}

While it is easy to obtain a polynomial-time algorithm for Problem~\ref{probMain} on orientable surfaces, much more work is required for non-orientable surfaces.
In this section we describe an algorithm that runs in polynomial time for surfaces of \emph{fixed} Euler genus.
We will later see that the problem becomes NP-hard for general surfaces.
A main ingredient of our algorithm is the following characterization of integer circulations that are~$\Z$-homologous to a given one.

\begin{theorem}
    \label{thm:zHomologousNonOrientable}
    Let~$D = (V,A)$ be a digraph cellularly embedded in a non-orientable surface of Euler genus~$g$ and~$y$ an integer circulation in~$D$. 
    Then, there exist vectors~$w_1,\dots,w_{g-1} \in \{0,\pm 1,\pm 2\}^A$,~$h \in \{0,1\}^A$ such that the following holds:
    An integer circulation~$x \in \Z^A$ is~$\Z$-homologous to~$y$, if and only if,
    \begin{equation}
        \label{eq32h9gh9}   
        w_i^\intercal x = w_i^\intercal y \quad \text{ for all } i \in [g-1]
    \end{equation}
    and
    \begin{equation}
        \label{eq4g3894g}
        h^\intercal x \equiv h^\intercal y \pmod* 2.
    \end{equation}
    Moreover,~$w_1,\dots,w_{g-1},h$ can be computed in polynomial time.
\end{theorem}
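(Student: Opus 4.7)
My plan is to construct $w_1, \ldots, w_{g-1}$ and $h$ explicitly from a tree-cotree decomposition of $G$, and to verify the equivalence using the structure of cellular cohomology with $\Z$- and $\Z/2\Z$-coefficients.

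First, I would compute a spanning tree $T \subseteq E$ of $G$ and a spanning tree $T^*$ of the dual graph $G^*$ using only edges from $E \setminus T$. Euler's formula~\eqref{eq:Euler} forces the leftover set $L := E \setminus (T \cup T^*)$ to have exactly $g$ elements, say $L = \{\ell_1, \dots, \ell_g\}$. For each $\ell_i \in L$, the subgraph $T^* \cup \{\ell_i^*\}$ of $G^*$ contains a unique simple cycle $D_i^*$; let $\bar\zeta_i \in \{0,1\}^A$ be the indicator of arcs $a$ whose dual lies on $D_i^*$. A direct argument (each facial walk bounds a disc and is thus two-sided, so crosses any simple closed dual curve an even number of times) shows that each $\bar\zeta_i$ is a mod-$2$ cocycle. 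Moreover, since $D_i^*$ uses no arc of $T$ and no arc of $L$ other than $\ell_i$, the pairing $\bar\zeta_i^\intercal \chi(C_{\ell_j}) \equiv \delta_{ij} \pmod{2}$, where $C_{\ell_j}$ is the primal fundamental cycle of $\ell_j$ with respect to $T$. This shows that $(\bar\zeta_i)_{i=1}^{g}$ is a basis of $H^1(\surf; \Z/2\Z) \cong (\Z/2\Z)^g$.

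Next, I would determine for each $i$ whether $D_i^*$ is two-sided or one-sided (equivalently, whether the product of dual signatures along $D_i^*$ is $+1$ or $-1$). If $D_i^*$ is two-sided, $\bar\zeta_i$ lifts to an integer cocycle $\zeta_i \in \{0, \pm 1\}^A$. Since $\surf$ is non-orientable, at least one $D_i^*$ must be one-sided; after reindexing, assume $D_g^*$ is one-sided. For each $i \in [g-1]$, define
\[
    w_i := \begin{cases} \zeta_i & \text{if } D_i^* \text{ is two-sided,} \\ \text{signed indicator of } D_i^* \cdot D_g^* & \text{if } D_i^* \text{ is one-sided,} \end{cases}
\]
where in the second case $D_i^* \cdot D_g^*$ denotes the concatenated closed walk in $G^*$, whose signature product is $(-1)(-1) = +1$ and which therefore lifts to an integer cocycle with entries in $\{0, \pm 1, \pm 2\}^A$ ($\pm 2$ only arising on arcs traversed by both $D_i^*$ and $D_g^*$). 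For $h$, I take the first Stiefel--Whitney class cocycle $h(a) := (1 - \lsignature(a))/2 \in \{0,1\}$, the indicator of arcs with negative signature; this is a mod-$2$ cocycle since every facial walk is two-sided and hence contains an even number of $\lsignature = -1$ arcs.

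To conclude, I would use that $x$ and $y$ are $\Z$-homologous if and only if $x - y$ vanishes in both $H_1(\surf; \mathbb{Q})$ and $H_1(\surf; \Z/2\Z)$, which follows from $H_1(\surf; \Z) \cong \Z^{g-1} \oplus \Z/2\Z$ together with the fact that the torsion subgroup injects under reduction of coefficients $H_1(\surf; \Z) \to H_1(\surf; \Z/2\Z)$. The $w_i$'s provide integer cocycles whose classes span $H^1(\surf; \mathbb{Q}) \cong \mathbb{Q}^{g-1}$, hence encode the first condition. The Stiefel--Whitney cocycle $h$ pairs nontrivially with the $\Z/2\Z$-torsion class of $H_1(\surf; \Z)$ (represented by a one-sided loop), so together with the $w_i$'s it encodes the second condition. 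All constructions run in polynomial time in the size of the embedding scheme.

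\textbf{Main obstacle.} The core technical step is the sign analysis for one-sided dual fundamental cycles: verifying that pairs of one-sided cycles concatenate to a $\Z$-cocycle with entries bounded by $\pm 2$, and that the resulting $w_1, \ldots, w_{g-1}$ are $\Z$-linearly independent in $H^1(\surf; \Z)$ (equivalently, that the mod-$2$ reductions $\bar w_1, \ldots, \bar w_{g-1}, \bar h$ together span $H^1(\surf; \Z/2\Z)$). I expect this to hinge on a careful analysis of the polygonal representation of $\surf$ obtained by cutting along $T \cup L$, whose boundary word is governed by $\lsignature$, and this is precisely where the non-orientable case genuinely differs from the orientable one (in which each $D_i^*$ is already two-sided and directly yields an integer cocycle).
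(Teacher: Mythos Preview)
Your overall architecture is close to the paper's: both build the $w_i$ from two-sided closed walks in the dual obtained by pairing up one-sided fundamental cycles, and both supplement the $g-1$ linear equations by a single parity constraint. The paper starts from a spanning $1$-tree in $G^*$ whose unique cycle $C^*$ is one-sided and takes $h$ to be the indicator of the arcs in $C$; your tree--cotree construction with a distinguished one-sided $D_g^*$ is an equivalent starting point.

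There is, however, a genuine gap in your choice of $h$. Taking $h$ to be the primal Stiefel--Whitney cocycle $h(a)=(1-\lsignature(a))/2$ does \emph{not} always detect the $\Z/2\Z$-torsion of $H_1(\surf;\Z)$. Concretely, for the non-orientable surface of Euler genus $g$ in its standard $a_1^2\cdots a_g^2$ model, the torsion generator is $\tau=\sum_i a_i$, each $a_i$ is one-sided, and hence $\langle\tau,w_1\rangle=g\pmod 2$. So for every surface of \emph{even} Euler genus (already the Klein bottle), the torsion element pairs trivially with your $h$, and a circulation $z$ with $[z]=\tau$ will satisfy all of your conditions while failing to be $\Z$-homologous to zero. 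Your parenthetical ``represented by a one-sided loop'' is exactly where this breaks: for even $g$ the torsion class is represented only by two-sided curves.

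The fix is simple and is precisely what the paper does: instead of the Stiefel--Whitney cocycle, take $h:=\bar\zeta_g$, the $\{0,1\}$-indicator of arcs whose duals lie on the chosen one-sided dual cycle $D_g^*$. Then $\bar w_i\in\{\bar\zeta_i,\bar\zeta_i+\bar\zeta_g\}$ and $\bar h=\bar\zeta_g$, so the $g\times g$ matrix of $\bar w_1,\dots,\bar w_{g-1},\bar h$ in the $\bar\zeta$-basis is unit upper-triangular and these classes span $H^1(\surf;\Z/2\Z)$ unconditionally. With this change, your plan goes through and matches the paper's argument in Section~\ref{secNonOrientable} (Lemmas~\ref{le:rhomologNon-OrientableSurface} and~\ref{le:Zhomolog}).
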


The constraints in~\eqref{eq32h9gh9} describe the affine subspace of all~$x$ for which there exist~$\fav_f$,~$f \in F$ such that~${x - y = \sum_{f \in F} \fav_f \chi(f)}$.
The parity constraint in~\eqref{eq4g3894g} then characterizes those~$x$, for which the coefficients~$\fav_f$ can be chosen to be integer. A proof of this characterization is given in Section~\ref{secNonOrientable}.

From now on, let us fix a non-orientable surface of Euler genus~$g$ and let~$w_1,\dots,w_{g-1},h$ be given as in Theorem~\ref{thm:zHomologousNonOrientable}.
Recall that in Problem~\ref{probMain} we are given costs~$c \in \R^A_{\ge 0}$ and an integer circulation~$y \in \Z^A$ in~$D$, and we want to find a minimum-cost non-negative integer circulation that is~$\Z$-homologous to~$y$.

In what follows, we will exploit the basic fact that every non-negative circulation can be decomposed into circulations that correspond to directed cycles.
To see whether a sum of such circulations is feasible for Problem~\ref{probMain}, we make use of the following notation.
Let~$d \in \Z^{g-1}$ be the vector whose~$i$-th entry is equal to~$w_i^\intercal y$. Set~$e := h^\intercal y \pmod* 2 \in \{0,1\}$.
For each walk~$W$ in~$D$, consider the vector~$q(W) \in \Z^{g-1}$ whose~$i$-th entry is equal to~$w_i^\intercal \chi(W)$.
Moreover, set~$p(W) := h^\intercal \chi(W) \pmod* 2 \in \{0,1\}$ and~$B := 2|V|$.

A closed walk~$W = v_1,a_1,v_2,\dots,v_{k-1},a_{k-1},v_k$ in~$D$ is called a \emph{$B$-walk} if~$\|q(W_i)\|_\infty \le B$ holds for all subwalks~$W_i = v_1,a_1,v_2,\dots,v_{i-1},a_{i-1},v_i$.
Notice that every directed cycle is a~$B$-walk, and hence every non-negative integer circulation is the sum of circulations that correspond to~$B$-walks.
We consider the set
\[
    \Omega := \{ (q(W), p(W)) \ : \ W \text{ is a $B$-walk in } D\}.
\]
\begin{lemma}
    \label{lem2083}
    For each~$(q,p) \in \Omega$, one can compute in polynomial time a~$B$-walk~$W =: W_{q,p}$ in~$D$ with~$q(W) = q$ and~$p(W) = p$ that minimizes~$c^\intercal \chi(W)$.
\end{lemma}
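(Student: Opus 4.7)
The plan is to reduce the optimization to a shortest-path computation in an auxiliary directed graph $H$ whose vertices encode the current node of $D$ together with the running values of $q$ and $p$. Since $B = 2|V|$ is polynomial in the input size and $g$ is fixed, this state space will have polynomial size, so Dijkstra's algorithm will terminate in polynomial time.

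First, I would construct $H$ as follows. Its vertex set is
\[
  V_H := V \times \{q \in \Z^{g-1} : \|q\|_\infty \le B\} \times \{0,1\},
\]
so that $|V_H| = 2|V|(2B+1)^{g-1} = O(|V|^g)$. For each arc $a \in A$, let $q(a) \in \{0,\pm 1,\pm 2\}^{g-1}$ denote the vector whose $i$-th entry is the $a$-th coordinate of $w_i$, and set $p(a) := h(a) \bmod 2 \in \{0,1\}$. For every arc $a = (v,v') \in A$ and every state $(v,q,p) \in V_H$ with $\|q + q(a)\|_\infty \le B$, I would add an arc in $H$ from $(v,q,p)$ to $(v',q+q(a),p \oplus p(a))$ of cost $c(a)$. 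The total number of arcs of $H$ is bounded by $2|A|(2B+1)^{g-1}$, again polynomial for fixed $g$.

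Next I would establish a cost-preserving correspondence between $B$-walks and directed walks in $H$. A straightforward induction on length shows that any directed walk in $H$ starting at $(v_1, 0, 0)$ corresponds to a walk $W = v_1, a_1, v_2, \ldots$ in $D$ starting at $v_1$ whose state after $i$ arcs is $(v_{i+1}, q(W_i), p(W_i))$; the admissibility constraint $\|q + q(a)\|_\infty \le B$ on arcs of $H$ is exactly the prefix bound from the definition of a $B$-walk. Moreover, the total cost of the $H$-walk equals $c^\intercal \chi(W)$, because walks in $D$ are directed and thus $\chi(W)_a$ simply counts the number of traversals of $a$ in $W$. In particular, closed $B$-walks $W$ at $v_1$ with $q(W) = q$ and $p(W) = p$ correspond exactly to directed walks in $H$ from $(v_1, 0, 0)$ to $(v_1, q, p)$, with matching cost.

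Finally, because $c \ge \zerovec$ the arc costs in $H$ are non-negative, so for every candidate starting node $v_1 \in V$ I would run Dijkstra's algorithm from $(v_1, 0, 0)$ to compute a minimum-cost walk ending at $(v_1, q, p)$ and reconstruct the corresponding $B$-walk via predecessor pointers. Returning the best result across all $v_1 \in V$ yields $W_{q,p}$; by definition of $\Omega$, at least one such $B$-walk exists whenever $(q, p) \in \Omega$, so the procedure always succeeds. There is no substantial obstacle here: the construction is a standard state-space expansion, and the only points to verify are that $H$ remains of polynomial size (because $g$ is fixed and $B$ is polynomial) and that the prefix bound is enforced exactly by excluding states with $\|q\|_\infty > B$ from $V_H$.
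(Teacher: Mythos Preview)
Your proposal is correct and is essentially identical to the paper's own proof: both construct the state-space expansion graph on $V \times \{-B,\dots,B\}^{g-1} \times \Z_2$, observe the cost-preserving bijection between $B$-walks with prescribed $(q,p)$ and walks between the appropriate states, and then compute shortest paths from $(v,\zerovec,0)$ for each $v\in V$. The only cosmetic difference is that you name Dijkstra explicitly (using $c\ge\zerovec$), whereas the paper simply invokes a shortest-path computation.
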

For the sake of exposition, we provide a proof at the end of this section. Notice that~$|\Omega| \le 2 (2B + 1)^{g-1} = \mathrm{poly}(|V|)$, hence the collection~$\{ W_{q,p} : (q,p) \in \Omega \}$ can be computed in polynomial time. Let us now consider the following set
\begin{align*}
    \cc := \bigg \{ \sum \nolimits_{(q,p) \in \Omega} z_{q,p} \chi(W_{q,p}) \ : \ & z_{q,p} \in \Z_{\ge 0} \text{ for every } (q,p) \in \Omega, \\[-1em]
    & \sum \nolimits_{(q,p) \in \Omega} z_{q,p} q = d, \, \sum \nolimits_{(q,p) \in \Omega} z_{q,p} p \equiv e \pmod* 2 \bigg \}
\end{align*}
of non-negative integer circulations in~$D$.
\begin{lemma}
    \label{lem3877}
    Every circulation in~$\cc$ is feasible for Problem~\ref{probMain}.
    Moreover,~$\cc$ contains at least one optimal solution.
\end{lemma}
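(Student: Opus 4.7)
The plan is to handle the two assertions in turn, using Theorem~\ref{thm:zHomologousNonOrientable} as the only homological ingredient and a standard flow decomposition for the optimality claim.

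For feasibility, I would fix an arbitrary $x = \sum_{(q,p) \in \Omega} z_{q,p}\, \chi(W_{q,p}) \in \cc$. Each $\chi(W_{q,p})$ is the characteristic flow of a closed directed walk in $D$, hence a non-negative integer circulation, and a non-negative integer combination preserves these properties. To see that $x$ is $\Z$-homologous to $y$, Theorem~\ref{thm:zHomologousNonOrientable} reduces the task to checking~\eqref{eq32h9gh9} and~\eqref{eq4g3894g}, and these follow directly from the defining constraints of $\cc$:
\[ w_i^\intercal x = \sum_{(q,p) \in \Omega} z_{q,p}\, q_i = d_i = w_i^\intercal y \qquad \text{and} \qquad h^\intercal x \equiv \sum_{(q,p) \in \Omega} z_{q,p}\, p \equiv e \equiv h^\intercal y \pmod* 2. \]

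For optimality, I would start from any optimal solution $x^*$ of Problem~\ref{probMain} and build an element of $\cc$ of cost at most $c^\intercal x^*$. A standard flow decomposition writes $x^* = \sum_j \chi(C_j)$, where each $C_j$ is a simple directed cycle in $D$ and therefore has at most $|V|$ arcs. Since $\|w_i\|_\infty \le 2$ by Theorem~\ref{thm:zHomologousNonOrientable}, every subwalk $W'$ of $C_j$ satisfies $\|q(W')\|_\infty \le 2|V| = B$, so each $C_j$ is a $B$-walk with a well-defined class $(q(C_j), p(C_j)) \in \Omega$. Grouping the cycles by class and setting $z_{q,p} := |\{\, j : (q(C_j), p(C_j)) = (q,p) \,\}|$, the identities $\sum z_{q,p}\, q_i = w_i^\intercal x^* = w_i^\intercal y = d_i$ and $\sum z_{q,p}\, p \equiv h^\intercal x^* \equiv h^\intercal y \equiv e \pmod* 2$ follow by applying Theorem~\ref{thm:zHomologousNonOrientable} to $x^*$. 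Hence $\tilde{x} := \sum_{(q,p)} z_{q,p}\, \chi(W_{q,p})$ belongs to $\cc$, and the minimality clause of Lemma~\ref{lem2083} gives $c^\intercal \chi(W_{q(C_j), p(C_j)}) \le c^\intercal \chi(C_j)$ for each $j$; summing over $j$ yields $c^\intercal \tilde{x} \le c^\intercal x^*$, so $\tilde{x}$ is itself optimal.

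I do not foresee any deep obstacle. The one step that calls for a moment's care is to observe that the simple directed cycles obtained from the flow decomposition of $x^*$ are automatically $B$-walks; this is where the entrywise bound $\|w_i\|_\infty \le 2$ from Theorem~\ref{thm:zHomologousNonOrientable} enters, together with the trivial length bound $|V|$ on simple directed cycles. Everything else amounts to bookkeeping with the defining identities of $\cc$ and direct appeals to Lemma~\ref{lem2083}.
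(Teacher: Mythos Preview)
Your proposal is correct and follows essentially the same approach as the paper: feasibility via Theorem~\ref{thm:zHomologousNonOrientable}, and optimality by decomposing an optimal $x^*$ into simple directed cycles (which are $B$-walks), replacing each by the minimum-cost $B$-walk in its class via Lemma~\ref{lem2083}, and comparing costs. The only cosmetic difference is that you spell out why simple directed cycles are $B$-walks using the bound $\|w_i\|_\infty \le 2$ from Theorem~\ref{thm:zHomologousNonOrientable}, whereas the paper records this fact once before stating the lemma.
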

Again, we postpone the proof to the end of this section.
Setting~$\tilde{c}_{q,p} := c^\intercal \chi(W_{q,p})$ for each~$(q,p) \in \Omega$, by Lemma~\ref{lem3877} it remains to obtain a solution for
\begin{align*}
    & \min \bigg \{ \sum \nolimits_{(q,p) \in \Omega} \tilde{c}_{q,p} z_{q,p} \ : \
        z \in \Z^\Omega_{\ge 0}, \,
        \sum \nolimits_{(q,p) \in \Omega} z_{q,p} q = d, \,
        \sum \nolimits_{(q,p) \in \Omega} z_{q,p} p \equiv e \pmod* 2 \bigg \} \\
    = & \min \bigg \{ \sum \nolimits_{(q,p) \in \Omega} \tilde{c}_{q,p} z_{q,p} \ : \
        z \in \Z^\Omega_{\ge 0}, \,
        k \in \Z_{\ge 0}, \,
        \sum \nolimits_{(q,p) \in \Omega} z_{q,p} q = d, \,
        \sum \nolimits_{(q,p) \in \Omega} z_{q,p} p = 2k + e \bigg \}.
\end{align*}
Notice that the latter is an integer program in~$n := |\Omega| + 1 = \mathrm{poly}(|V|)$ variables of the form
\[
    \min \left\{ \bar{c}^\intercal x \ : \ \bar{A}x = \bar{b}, \, x \in \Z^n_{\ge 0} \right\},
\]
where~$\bar{A} \in \Z^{g \times n}$ and~$\bar{b} \in \Z^g$.
Recall that the entries in~$\bar{A}$ are polynomially bounded in~$n$, and that~$g$ (the number of rows in~$\bar{A}x = \bar{b}$) is assumed to be fixed.
It is known that integer programs of this form can be solved in polynomial time.
For instance, a polynomial-time algorithm for this setting is described in~\cite{artmann2016note}, which is based on Papadimitriou's pseudopolynomial-time algorithm for integer programs with a fixed number of constraints~\cite{papadimitriou1981complexity}.
Another approach can be found in~\cite[Thm.~3.3]{eisenbrand2019proximity}.
This finishes the proof of Theorem~\ref{thm:existenceOfPolyAlgo}.
We close this section by providing the proofs for Lemma~\ref{lem2083} and Lemma~\ref{lem3877}.
\begin{proof}[Proof of Lemma~\ref{lem2083}]
    We determine each~$W_{q,p}$ by computing shortest paths in the following auxiliary graph~$\widebar{D} = (\widebar{V}, \widebar{A})$ defined by
 	\begin{align*}
	\widebar{V} &:= \bigg \{ (v,(x,y)) \ : \ v \in V, \ x \in \{-B, \ldots, B\}^{g-1}, y \in \Z_2 \bigg \},\\
 	\widebar{A} &:= \Bigg \{ \left( (v,(x,y)), (v',(x',y')) \right) \ : \ 
 	\begin{array}{@{}l@{}}
 		(v,(x,y)), (v',(x',y')) \in \widebar{V}, \ (v,v') \in A, \\
 		x + M \chi((v,v')) = x',\\
 		y + \ h^\intercal \chi((v,v'))\equiv y' \pmod* 2
 	\end{array} \Bigg \}.
 	\end{align*}
 	Here, $M$ is the matrix whose rows are the vectors~$w_1^\intercal,\dots,w_{g-1}^\intercal$.
 	Observe that for every walk~$W$ we have~$M \chi(W) = q(W)$.
    The cost~$\widebar{c}$ of an arc~$\widebar{a} = \left( (v,(x,y)), (v',(x',y')) \right)$ in~$\widebar{A}$ is defined by~$\widebar{c}(\widebar{a}) := c((v,v'))$.
    Notice that~$\widebar{D}$ can be constructed in polynomial time and that~$\widebar{c}$ is non-negative.

    Let~$(q,p) \in \Omega$ and fix a node~$v \in V$.
    We observe that there is a bijection between~$B$-walks~$W$ in~$D$ starting (and ending) at~$v$ with~$q(W) = q$ and~$p(W) = p$, and walks~$\widebar{W}$ in~$\widebar{D}$ from~$(v, (\zerovec, 0))$ to~$(v,(q(W), p(W)))$.
    Moreover, the costs of~$W$ and~$\widebar{W}$ coincide. Indeed, let~$W$ be a~$B$-walk in~$D$ with~$q(W) = q$ and~$p(W) = p$, which starts at~$v$.
    Let~$v_1,\dots,v_k,v_1$ be the sequence of nodes visited by~$W$, and let~$W_i$ denote the respective subwalk from~$v_1$ to~$v_i$.
    Then, walk~$\widebar{W}$ in~$\widebar{D}$ is obtained by visiting the nodes
    \[
        (v_1, (\zerovec, 0)), \, (v_2, (q(W_2), p(W_2))), \dots, (v_k, (q(W_k), p(W_k))), \, (v_1, (q(W), p(W)))
    \]
    in the given order, and the cost of~$W$ equals the cost of~$\widebar{W}$.

    Conversely, consider any walk~$\widebar{W}$ from~$(v,(\zerovec,0))$ to~$(v,(q,p))$ in~$\widebar{D}$, and let~$(v_i, (x_i, y_i))$,~$i=1,\dots,k$, be the sequence of nodes it visits.
    Define~$W$ to be the closed walk that visits the nodes~$v_1,\dots,v_k$.
    We see that~$W$ is a~$B$-walk with~$q(W) = q$ and~$p(W) = p$, and that the costs of~$\widebar{W}$ and~$W$ coincide.

    We conclude that a~$B$-walk~$W$ in~$D$ with~$q(W) = q$ and~$p(W) = p$ minimizing~$c^\intercal \chi(W)$ can be found by computing a shortest path in~$\widebar{D}$ from~$(v, (\zerovec, 0))$ to~$(v,(q(W), p(W)))$ for every~$v \in V$, and returning the walk in~$D$ that corresponds to the path of minimum length.
\end{proof}
\begin{proof}[Proof of Lemma~\ref{lem3877}]
    Again, let~$M$ be the matrix whose rows are the vectors~$w_1^\intercal,\dots,w_{g-1}^\intercal$.
    Observe that for every walk~$W$ we have~$M \chi(W) = q(W)$.
    Moreover, by Theorem~\ref{thm:zHomologousNonOrientable} a non-negative integer circulation~$x$ in~$D$ is feasible for Problem~\ref{probMain}, if and only if,~$Mx = d$ and~$h^\intercal x \equiv e \pmod* 2$.

    Let~$x = \sum \nolimits_{(q,p) \in \Omega} z_{q,p} \chi(W_{q,p})$ be any circulation in~$\cc$.
    First, notice that each~$\chi(W_{q,p})$ is a non-negative integer circulation in~$D$, and so is~$x$.
    Moreover, we have
    \[
        Mx = \sum \nolimits_{(q,p) \in \Omega} z_{q,p} M \chi(W_{q,p}) = \sum \nolimits_{(q,p) \in \Omega} z_{q,p} q(W_{q,p}) = \sum \nolimits_{(q,p) \in \Omega} z_{q,p} q = d
    \]
    as well as
    \[
        h^\intercal x = \sum \nolimits_{(q,p) \in \Omega} z_{q,p} h^\intercal \chi(W_{q,p}) \equiv \sum \nolimits_{(q,p) \in \Omega} z_{q,p} p \equiv e \pmod* 2.
    \]
    Thus,~$x$ is feasible for Problem~\ref{probMain}.
    
    Now let~$x^*$ be an optimal solution to Problem~\ref{probMain}.
    As discussed earlier, we may decompose~$x^*$ into~$B$-walks~$W_1,\dots,W_k$ such that~$x^* = \sum_{i=1}^k z_i \chi(W_i)$, where~$z_1,\dots,z_k \in \Z_{\ge 0}$.
    Clearly, we have that~$(q_i, p_i) := (q(W_i), p(W_i)) \in \Omega$ for~$i=1,\dots,k$.
    Consider the non-negative integer circulation
    \[
        x' := \sum \nolimits_{i=1}^k z_i \chi(W_{q_i,p_i}).
    \]
    As~$x^*$ is feasible, we have~$Mx^* = d$ and~$h^\intercal x^* \equiv e \pmod* 2$, which yields
    \[
        \sum \nolimits_{i=1}^k z_i q_i = \sum \nolimits_{i=1}^k z_i M \chi(W_i) = Mx^* = d
    \]
    as well as
    \[
        \sum \nolimits_{i=1}^k z_i p_i = \sum \nolimits_{i=1}^k z_i h^\intercal \chi(W_i) = h^\intercal x^* \equiv e \pmod* 2.
    \]
    This shows that~$x' \in \cc$.
    In particular,~$x'$ is also feasible for Problem~\ref{probMain}.
    By definition of~$W_{q_i, p_i}$, we have~$c^\intercal \chi(W_{q_i, p_i}) \le c^\intercal \chi(W_i)$ for~$i=1,\dots,k$, which yields~$c^\intercal x' \le c^\intercal x^*$.
    Therefore,~$x'$ is also an optimal solution to Problem~\ref{probMain}.
\end{proof}

\section{Hardness for instances on general non-orientable surfaces}
\label{sec:hardness}

In the previous section, we have shown that Problem~\ref{probMain} can be solved in polynomial time on non-orientable surfaces of fixed Euler genus. This problem becomes NP-hard on general non-orientable surfaces.

Let us consider the following problem, which is a special case of (the decision version of) Problem~\ref{probMain}.
\begin{prob}
    \label{hardProblem}
        Given a digraph~$D = (V, A)$ cellularly embedded in a surface with arc cost~$c \in  \{0,\frac{1}{2}, 1\}^A$ such that~$\onevec \in \Z^A$ is a circulation in~$D$, and an integer~$k$,
        decide whether there exists a non-negative integer circulation in~$D$ that is~$\Z$-homologous to~$\onevec$ and has cost at most~$k$.
\end{prob}
In what follows, we will prove that Problem~\ref{hardProblem} is NP-hard, which implies Theorem~\ref{thm:hard}. We will also see that the problem remains hard if we restrict ourselves to circulations in~$\{0,1\}^A$.
In Section~\ref{se:ReductionSTABtoCIRC}, we show that the following problem can be efficiently reduced to Problem~\ref{hardProblem}.

\begin{prob}
	\label{prob:Stab_Z}
	Given a connected graph~$G = (V, E)$ together with edge cost~$c \in  \{0,\frac{1}{2}, 1\}^E$, and an integer~$k$, decide whether there exists a vector~$x \in \Z^V$ satisfying~$x(v) + x(w) \le 1$ for each~$\{v,w\} \in E$ and~$\sum_{\{v,w\} \in E} c(\{v,w\}) (x(v) + x(w)) \ge k$.
\end{prob}

Problem~\ref{prob:Stab_Z} can be seen as a special stable set problem where we neglect the non-negativity constraints.
We will show that the following special case of the weighted stable set problem can be efficiently reduced to Problem~\ref{prob:Stab_Z}.
A proof is given in Section~\ref{sec23789}.
The node weights in Problem~\ref{prob:specialSTAB} and~\ref{prob:Stab_Z} are induced by edge costs, meaning that the weight of a node is just the sum of the costs of its incident edges.

\begin{prob}
    \label{prob:specialSTAB}
    Given a graph~$G = (V, E)$ with edge cost~$c \in \{0,\frac{1}{2}, 1\}^E$ and an integer~$k$, decide whether there exists a stable set~$S \subseteq V$ in~$G$ such that~$\sum_{e \in E} c(e) |S \cap e| \ge k$.
\end{prob}

Finally, we show that Problem~\ref{prob:specialSTAB} is NP-hard by a reduction from 3-SAT in the next section. This concludes the proof of Theorem~\ref{thm:hard}.

\subsection{Hardness of Problem~\ref{prob:specialSTAB}}
\label{sec:hardnessofspecialSTABproblem}
The following reduction is based on the standard reduction for the classical stable set problem, see Garey and Johnson~\cite{garey1979computers}.

Let~$(U, C)$ be any instance of~$3$-SAT, where~$U$ is the set of variables and~$C$ denotes the set of clauses.
Now, for each variable~$u \in U$ we define the graph~$G^u$ consisting of two nodes representing~$u$ and its negation~$\bar{u}$, which are joined by an edge~$e_u$.
The cost of this edge is set to~$c(e_u) := 1$.
Next, for each clause~$c \in C$ we define a triangle graph~$G^c$ containing one node for each literal in~$c$ and three edges connecting them.
We assign a cost of~$\frac{1}{2}$ to all edges in the triangle.
Finally, we define~$G = (V,E)$ as union of all~$G^u$ ($ u \in U$) and~$G^c$ ($c \in C$) together with the following additional edges:
For each literal~$\ell$ that appears in a clause~$c$ and corresponds to variable~$u$, connect the node in~$G^c$ that represents~$\ell$ with the node in~$G^u$ that represents the negation of~$\ell$.
The edge cost~$c$ for all these additional edges is defined to be zero.

Notice that every stable set~$S$ in~$G$ satisfies~$|S| = \sum_{e \in E} c(e) |S \cap e|$.
We leave to the reader to check that~$(U, C)$ is satisfiable if and only if~$G$ has a stable set~$S$ of cardinality~$|S| \ge |U| + |C|$.
A formal proof can be found in~\cite{garey1979computers}.

\subsection{Reduction from Problem~\ref{prob:specialSTAB} to Problem~\ref{prob:Stab_Z}}
\label{sec23789}
First, we  may assume~$G$ to be connected, otherwise we would treat each component separately. Since the problem becomes easy in the case of bipartite graphs, let~$G$ be non-bipartite.

It remains to show that~$G$ has a stable set~$S$ with~$\sum_{e \in E} c(e) |S \cap e| \ge k$, if and only if, there exists a vector~$x \in \Z^V$ satisfying~$x(v) + x(w) \le 1$ for each~$\{v,w\} \in E$ and
\begin{equation}
    \label{eq3948}
    \sum_{\{v,w\} \in E} c(\{v,w\}) (x(v) + x(w)) \ge k.
\end{equation}
If~$G$ has a stable set~$S$ with~$\sum_{e \in E} c(e) |S \cap e| \ge k$, we define~$x \in \{0,1\}^V$ to be the characteristic vector of~$S$.
For each edge~$e = \{v,w\} \in E$ we clearly have~$|S \cap e| = x(v) + x(w) \le 1$, and hence~\eqref{eq3948} holds.

Conversely, suppose that
\[
    \max \left \{ \sum \nolimits_{\{v,w\} \in E} c(\{v,w\}) (x(v) + x(w)) \ : \ x \in \Z^V, \, x(v) + x(w) \le 1 \text{ for all } \{v,w\} \in E \right \} \ge k.
\]
Since~$G$ is non-bipartite, it can be shown that the convex hull of feasible solutions to the above integer program is a pointed polyhedron.
Moreover, it can be shown that each vertex of this polyhedron is a~$0/1$-vector.
Both facts and their proofs can be found in~\cite[Proposition~12]{ConfortiFHJW2019}.
Moreover, as~$c$ is non-negative, the above integer program is certainly bounded.
This means that the optimum to the above integer program is attained at a point~$x \in \{0,1\}^V$, which yields the claim.

\subsection{Reduction from Problem~\ref{prob:Stab_Z} to Problem~\ref{hardProblem}}
\label{se:ReductionSTABtoCIRC}
The following reduction is based on methods developed in~\cite{ConfortiFHJW2019} that are designed for graphs with a particular embedding. First, we have to construct such an embedding for the input graph~$G$ which will have the property that the edges of the dual graph~$G^*$ may be directed such that every facial walk in~$G^*$ is a directed walk.

The embedding is obtained by equipping~$G$ with a rotation system~$\Pi = (\pi, \lsignature)$ in which the signature of every edge is defined to be~$-1$. Its cyclic permutations around each node can be chosen arbitrarily.

We may define the set~$F$ of~$\Pi$-facial walks in~$G$ corresponding to this embedding. It can then be used to define the dual graph~$G^*$ together with a corresponding dual embedding scheme~$\Pi^* = (\pi^*, \lambda^*)$ and a set~$F^*$ of exactly these~$\Pi^*$-facial walks whose traversing directions correspond to the cyclic permutations~$\pi$ around the nodes in~$G$. 
Since~$\lsignature = -\onevec$, all dual edges in~$G^*$ will always be used in the same direction in the~$\Pi^*$-facial walks in~$F^*$.
To obtain a digraph~$D = (V^*, A)$ we direct all edges in~$G^*$ corresponding to the direction in which the edges are used in the walks in~$F^*$. Now, all~$\Pi^*$-facial walks in~$F^*$ are directed walks.
Observe that in~$D$ the vector~$\onevec \in \Z^A$ is a circulation. Indeed, it is half times the sum over all characteristic flows of facial walks in~$F^*$, i.e.~$\onevec = \frac{1}{2} \sum_{v^* \in F^*} \chi(v^*)$.
Since there is a one-to-one correspondence between~$E$ and~$A$, the cost on the edges in~$G$ may also be seen as arcs cost~$c \in \{0,\frac{1}{2}, 1\}^A$. 
It remains to show that there exists a vector~$x \in \Z^V$ satisfying~$x(v) + x(w) \le 1$ for each~$\{v,w\} \in E$ and
\begin{equation}
    \label{eq1238}
    \sum_{\{v,w\} \in E} c(\{v,w\}) (x(v) + x(w)) \ge k,
\end{equation}
if and only if, there exists a non-negative integer circulation in~$D$ that is~$\Z$-homologous to~$\onevec \in \Z^A$ and has cost at most~$\sum_{e \in E} c(e) - k$.

Suppose that there exists a vector~$x \in \Z^V$ satisfying~$x(v) + x(w) \le 1$ for each~$\{v,w\} \in E$ and Inequality~\eqref{eq1238}.
Since each facial walk in~$G^*$ is a directed walk, 
\[
    y := \onevec - \sum_{v \in V} x(v) \chi(v^*)
\]
is an integer circulation in~$D$ that is~$\Z$-homologous to~$\onevec$.
Moreover, each arc~$a$ in~$D$ appears exactly twice in~$\Pi^*$-facial walks in~$F^*$ and the nodes (or node) in~$G$ corresponding to these~$\Pi^*$-facial walks are joined by an edge whose dual edge corresponds to~$a$.
Since~$x(v) + x(w) \le 1$ for each~$\{v,w\} \in E$, the value of~$y$ assigned to the dual arc corresponding to~$\{v,w\}$ is non-negative.
Moreover, we have
\begin{align*}
    c^\intercal y
    = c^\intercal \left(\onevec - \sum \nolimits_{v \in V} \chi(v^*) x(v) \right)
    & = \sum_{e \in E} c(e) - \sum_{v \in V} \left(\sum \nolimits_{e \in \delta(v)} c(e)\right) x(v) \\
    & = \sum_{e \in E} c(e) - \sum_{\{v,w\} \in E} c(\{v,w\}) (x(v) + x(w)) \\
    & \le \sum_{e \in E} c(e) - k.
\end{align*}

Conversely, consider any non-negative integer circulation~$y$ in~$D$ that is~$\Z$-homologous to the circulation~$\onevec$ with cost at most~$\sum_{e \in E} c(e) - k$.
Since~$y$ is~$\Z$-homologous to~$\onevec$, there exist coefficients~$\fav_{v^*}$ for all~$v^* \in \Z^{F^*}$ such that~$y = \onevec - \sum_{v^* \in F^*} \fav_{v^*} \chi(v^*)$.
As there is a on-to-one correspondence between nodes~$v$ in~$G$ and~$\Pi^*$-facial walks~$v^*$ in~$F^*$, we may define the vector~$x \in \Z^V$ via~$x(v) = \fav_{v^*}$, for all~$v \in V$.
Since~$y$ is non-negative, the sum of two coefficients~$\fav_{v^*}$ and~$\fav_{w^*}$ that correspond to facial walks using the same arc can never exceed one.
Moreover, we have
\begin{align*}
    \sum_{\{v,w\} \in E} c(\{v,w\}) (x(v) + x(w))
    & = \sum_{v \in V} \left(\sum \nolimits_{e \in \delta(v)} c(e)\right) x(v)
    = c^\intercal \sum_{v \in V} x(v) \chi(v^*) \\
    & = \sum_{e \in E} c(e) - c^\intercal \left(\onevec - \sum \nolimits_{v \in V} x(v) \chi(v^*)\right) \\
    & = \sum_{e \in E} c(e) - c^\intercal y \ge k,
\end{align*}
which concludes the proof.

\section{Characterizing homology}
\label{secCharHomology}

In this section, we present alternative characterizations of homology, leading to a discussion of Theorem~\ref{thm:descriptionOfP(D,y)} and the proof of Theorem~\ref{thm:zHomologousNonOrientable}.
To this end, we need to provide further details and definitions regarding surfaces and homology.
For further information, we refer to the books of Hatcher~\cite{Hatcher2005} and Mohar and Thomassen~\cite{MoharT2001}.

Consider a graph~$G = (V,E)$ cellularly embedded in a surface with a corresponding embedding scheme~$\Pi = (\pi, \lsignature)$. In what follows, we provide a more formal definition of the set~$F$ of~$\Pi$-facial walks.

Consider the following procedure which defines a~$\Pi$-facial walk, see Figure~\ref{fig:facialWalks}: Start at an arbitrary node~$v$ and an edge~$e$ incident to~$v$, then traverse~$e$ and continue the walk at the edge~$e^{\prime}$ coming after, or before,~$e$ in the cyclic permutation given by~$\pi$ if the signature of~$e$ is positive, or negative, respectively. 
Reaching the next node, we continue again with the edge coming after, or before, the edge~$e^{\prime}$ if the number of already traversed edges with negative signature is even, or odd, respectively. We continue until the following three conditions are met: (i) we reach the starting node~$v$, (ii) the number of traversed edges with negative signature is even, (iii) the next edge would be the starting edge~$e$. In this way, we obtain a collection of closed walks which we then call~$\Pi$-facial walks. Notice that the~$\Pi$-facial walks of a digraph are walks in the underlying undirected graph. We consider two~$\Pi$-facial walks to be equivalent if they only differ by a cyclic shift of nodes and edges or if one is the reverse of the other one. Let us pick one~$\Pi$-facial walk from each equivalence class and denote the resulting set of walks by~$F$. Notice that every edge is either contained twice in one walk in~$F$ or in exactly two walks in~$F$. 

\begin{figure}[htb]
	\centering
	\includegraphics[width=0.5\textwidth]{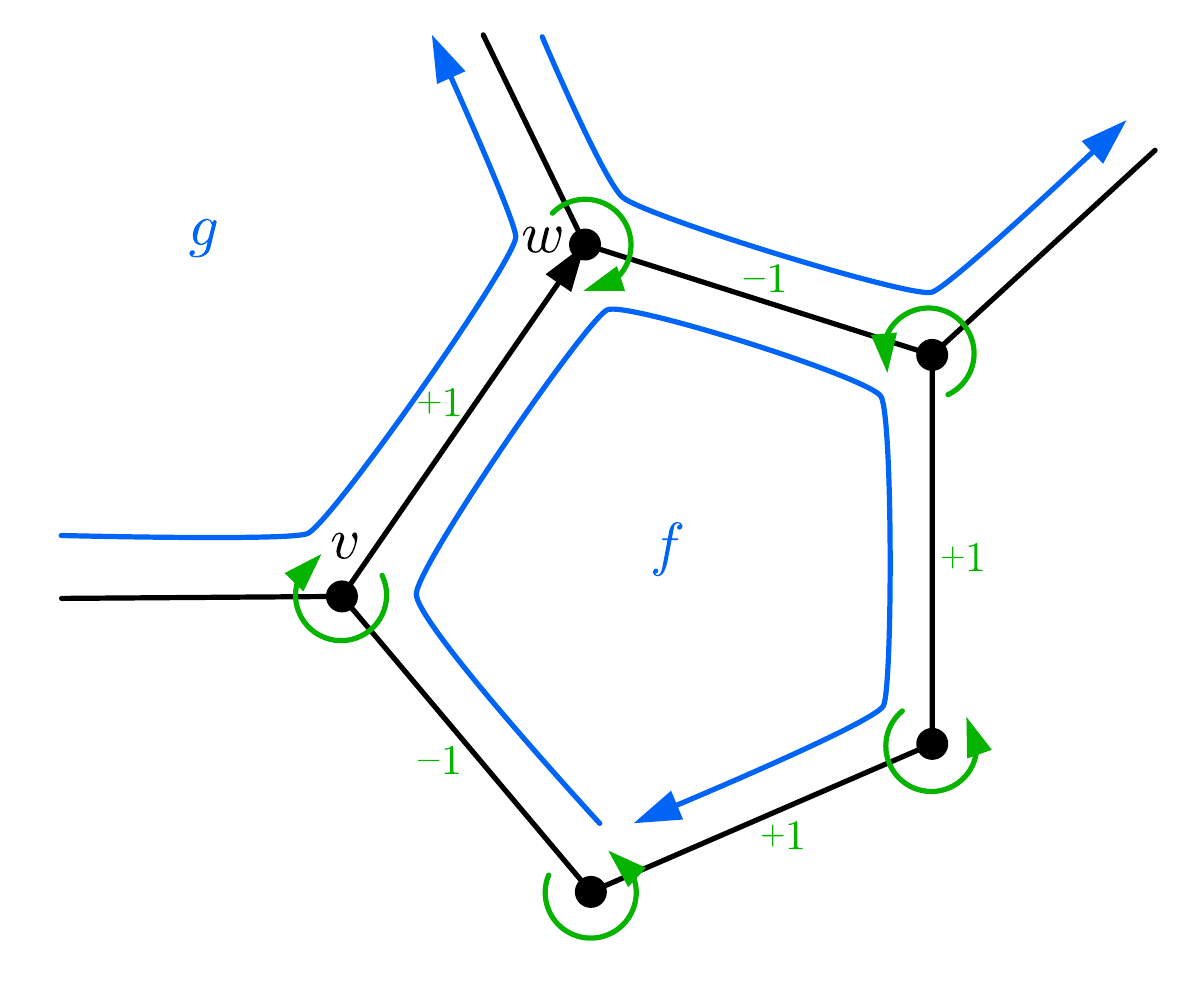}
	\caption{An extract of an embedded graph. The embedding scheme is depicted in green: Arrows around the nodes indicate their local orientations and the numbers on the edges the induced signature. The facial walks in~$F$ are drawn in blue.}
	\label{fig:facialWalks}
\end{figure}

By construction, the number of used edges with negative signature in a~$\Pi$-facial walk is even.
If a closed walk in~$G$ traverses an even number of edges with negative signature, it is called \emph{two-sided}, otherwise it is called \emph{one-sided}.
It turns out that every two-sided cycle in the surface has a neighborhood that is homeomorphic to an annulus, whereas every one-sided cycle has a neighborhood that is homeomorphic to an open Möbius band. It follows that the presence of a one-sided cycle implies that the underlying surface is non-orientable.

To elaborate on an alternative characterization of homology, we also need the following notion.
In addition to the notion of the characteristic flow~$\chi(W) \in \Z^A$ for walks~$W$ in~$G$, we define the vector~$\xi(H^*) \in \Z^A$ for any walk~$H^*$ in the dual graph~$G^*$ as follows. Intuitively, we think of~$\xi(H^*)$ as a flow that sends one unit along the edges in~$H^*$. Whenever a unit is sent along a dual edge, we account it for the corresponding arc in~$D$. The sign of this value will depend on the direction we traverse~$H^*$ along this arc. Formally, consider any arc~$a = (v, w) \in A$ and let~$f$ be any $\Pi$-facial walk in~$F$ of~$G$. Set~$s(a, f) \in \{-1,0,1\}$ to be non-zero in the case that edge~$\{v,w\}$ appears once $f$ or twice in the same direction and zero otherwise. If~$f$ traverses the edge from~$v$ to~$w$, then~$s(a, f) = 1$, otherwise~$s(a, f) = - 1$. Observe, that the sign of $ s(a,f) $ equals the sign of $ \partial_{a,f} $. For instance, in Figure~\ref{fig:facialWalks},~$s((v,w),f) = 1$. 
Now, for a walk~$H^* = (f_1^*, e_1^*, f_2^*, \dots, e_{\ell-1}^*, f_{\ell}^*)$ in~$G^*$ and arc~$a \in A$, we define
\begin{equation*}
    \xi(H^*)(a) := \sum_{\substack{i \in \{1, \dots, \ell -1\} \\ a = e_i}} \lambda^*(e_1^*)\cdot\cdots\cdot \lambda^*(e_{i-1}^*) s(a, f_i).
\end{equation*} 
Observe that~$\langle z, \xi(v^*) \rangle = 0$ for any circulation~$z$ in~$D$ and any~$\Pi^*$-facial walk~$v^*$ in~$G^*$.

Before we start with the characterization of~$\Z$-homology, we consider the slightly weaker concept of \emph{$\R$-homology}, which arises by dropping the integrality condition for the~$\fav_f$ in the definition of~$\Z$-homology.
More formally, given a directed graph~$D = (V, A)$ cellularly embedded in~$\surf$, two circulations~$x,y \in \R^A$ are called~$\R$-homologous if~$x - y$ is a linear combination of characteristic flows of~$\Pi$-facial walks, which we also call \emph{facial circulations}. That is, there exists an assignment vector~$\fav \in \R^F$ with a coefficient~$\fav_f \in \R$ for each facial walk~$f \in F$, such that~$x - y = \sum_{f \in F} \fav_f \chi(f)$. 
To rewrite the above in a compact way, recall the matrix~$\partial = \partial_D \in \Z^{A \times F}$ defined by
\begin{equation*}
    \partial_{a,f} := \chi(f)(a) \quad \text{ for all } a \in A, \, f \in F.
\end{equation*}
Circulations~$x, y$ are~$\R$-homologous if~$x = y + \partial \fav$, for some~$\fav \in \R^F$. 

Observe that two circulations~$x, y$ are~$\R$-homologous ($ \Z$-homologous), if and only if~$x-y$ is~$\R$-homologous ($ \Z$-homologous) to the circulation~$\zerovec \in \R^A$. 
For this reason, in what follows we will first provide an alternative description of circulations that are~$\R$-homologous ($\Z$-homologous) to~$\zerovec \in \R^A$, which then directly yields characterizations for~$\R$-homology ($ \Z$-homology) between two arbitrary circulations.

\subsection{Orientable surfaces}
\label{secOrientable}

For any digraph~$D$, we denote by~$P(D)$ the convex hull of non-negative integer circulations in~$D$. It is a basic fact that~$P(D)$ is actually equal to the set of all non-negative circulations in~$D$. Hence, this polyhedron can be described as the set of all~$x \in \R^A_{\ge 0}$ that satisfy the ``flow conservation'' constraints.
Regarding Problem~\ref{probMain}, we are interested in the convex hull of only those integer circulations in~$P(D)$ that are~$\Z$-homologous to a given integer circulation~$y$, and we denote the respective polyhedron by~$P(D, y)$. The purpose of this section is to show that a description of~$P(D, y)$ can be easily obtained in the orientable case. 

As mentioned in Section~\ref{subsec:homology}, matrix~$\partial$ is totally unimodular in the case of orientable surfaces, and hence, Problem~\ref{probMain} can be solved in polynomial time.
Here, we would like to elaborate on another consequence for the description of~$P(D, y)$. By expressing~$\Z$-homology using~$\partial$, we know that
\begin{align*}
P(D, y) &= \conv \Set*{ x \in \Z^A \ : \ x = y + \partial \fav, \,  x \ge \zerovec, \, \fav \in \Z^F } \\
&= \conv \Set*{ x \in \R^A \ : \ x = y + \partial \fav, \, x \ge \zerovec, \, \fav \in \R^F },
\end{align*}
where the second equality follows from the integrality of the latter polyhedron, a consequence of~$\partial$ being totally unimodular. This means that~$P(D, y)$ is the set of all non-negative circulations in~$D$ that are~$\R$-homologous to~$y$. Denoting by~$L(D, y)$ the set of all circulations in~$D$ that are~$\R$-homologous to~$y$, we obtain
\[
P(D, y) = P(D) \cap L(D, y),
\]
and hence Theorem~\ref{thm:descriptionOfP(D,y)}. 
To obtain an even more explicit description of~$P(D, y)$, observe that~$L(D, y)$ is an affine subspace which is generated by all facial circulations and shifted by~$y$. 
First, let us consider the case in which~$y = \zerovec$. The set~$L(D, \zerovec)$ of all circulations in~$D$ that are~$\R$-homologous to~$\zerovec$ is the subspace generated by all facial circulations.
If the surface is orientable, the facial circulations generate a space of dimension~$|F| - 1$. On the other hand, it is well-known that the space of all circulations in~$D$ is~$(|A| - |V| + 1)$-dimensional. Thus, besides the constraints describing the set of all circulations, Euler's formula~\eqref{eq:Euler} yields that we need~$g$ additional constraints to obtain~$L(D, \zerovec)$.

These constraints can be obtained by the following construction, also see~\cite{ChambersEN2012}. Pick any spanning tree~$K$ in~$G$ and observe that~$G^* \setminus K^*$ is still connected. Hence, there exists a spanning tree~$T^*$ in~$G^* \setminus K^*$. By Euler's formula, there exist exactly~$g$ edges~$e_1, \dots, e_g$ in~$G$ that are not contained in~$K$ and whose dual edges~$e_1^*, \dots, e_g^*$ are not contained in~$T^*$. For each~$i \in [g]$, we define the cycle~$C_i$ as the unique (dual) cycle in~$T^* \cup \{ e_i^* \}$. These~$g$ cycles will yield the additional constraints needed to describe~$L(D, \zerovec)$.

\begin{proposition}
	\label{propDescriptionL(D,y)Orientable}
	Let~$D$ be a digraph cellularly embedded in an orientable surface of Euler genus~$g$, and let~$C_1, \dots, C_g$ be the (dual) cycles defined above.
	Then,
	\begin{equation*}
	L(D,\zerovec) = \Set*{ x \in \R^A \ : \ x \textnormal{ is a circulation and } \langle x, \xi(C_i) \rangle = 0 \: \forall  i \in [g] }.
	\end{equation*}
\end{proposition}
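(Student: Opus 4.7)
The plan is to establish the two inclusions separately, with the reverse one handled by dimension counting.

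\textbf{Forward inclusion.} Since $L(D,\zerovec)$ is the real span of the facial circulations $\chi(f)$, $f \in F$, it suffices to verify that $\langle \chi(f), \xi(C_i)\rangle = 0$ for every primal face $f$ and every $i \in [g]$. This is the primal-dual analogue of the identity already recorded in Section~\ref{secCharHomology} (that $\langle z, \xi(v^*)\rangle = 0$ for every circulation $z$ and $\Pi^*$-facial walk $v^*$), and I would prove it by the same local bookkeeping with primal and dual interchanged: each visit of $C_i$ to the dual node $f^*$ contributes two terms to $\chi(f)^\intercal \xi(C_i)$, coming from the entering and leaving dual edges at $f^*$, and on an orientable surface the factors $s(\cdot,f)$ and $\chi(f)(\cdot)$ together with the accumulated products of $\lambda^*(\cdot)$ along $C_i$ conspire to make these two terms cancel.

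\textbf{Reverse inclusion via dimensions.} For connected $G$ the circulation space of $D$ has dimension $|A|-|V|+1$, and on an orientable surface the facial circulations satisfy exactly one linear relation (their signed sum with respect to a consistent orientation of the faces is zero), so $\dim L(D,\zerovec) = |F|-1$. Euler's formula~\eqref{eq:Euler} then gives $(|A|-|V|+1)-(|F|-1) = g$. Hence to deduce equality it suffices to show that the $g$ linear functionals $\langle \cdot, \xi(C_i)\rangle$ remain linearly independent when restricted to the circulation space.

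\textbf{Linear independence via fundamental cycles.} For each $i \in [g]$, let $D_i$ be the fundamental primal cycle in $K \cup \{e_i\}$, so $\chi(D_i)$ is a circulation supported on $K \cup \{e_i\}$. The vector $\xi(C_j)$ is supported on $\{e_j\}$ together with the set of primal edges whose duals lie in $T^*$. Since $T^* \subseteq G^* \setminus K^*$, this latter set is disjoint from $K$, and the supports of $\chi(D_i)$ and $\xi(C_j)$ therefore meet in at most one edge, and only when $i=j$, namely the edge $e_i$, at which both $\chi(D_i)(e_i)$ and $\xi(C_i)(e_i)$ are $\pm 1$. Consequently the pairing matrix $\bigl(\chi(D_i)^\intercal \xi(C_j)\bigr)_{i,j\in[g]}$ is diagonal with $\pm 1$ entries, hence invertible.

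The main obstacle is the forward inclusion: while topologically it just says that a closed dual curve has zero algebraic intersection number with the boundary of a disc, turning this into a formal identity in the combinatorial formalism of $\chi$, $\xi$ and $\Pi^* = (\pi^*, \lambda^*)$ requires patient sign bookkeeping, and it is precisely here that orientability of the surface enters (otherwise the intended cancellations need not occur). Once the forward inclusion is established, the rest of the argument is routine linear algebra.
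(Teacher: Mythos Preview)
Your proposal is correct and follows essentially the same route as the paper: forward inclusion by checking that facial circulations pair to zero with each $\xi(C_i)$, then dimension counting using the fundamental primal cycles $H_i$ in $K\cup\{e_i\}$ to witness that the $g$ functionals are independent on the circulation space. The only real difference is in the forward inclusion: rather than your per-face cancellation argument, the paper works directly with a general $x=\sum_f \eta_f\chi(f)$ and observes that for any dual cycle $H=(f_1^*,\dots,f_k^*)$ one has $\langle x,\xi(H)\rangle=\sum_{i=1}^k(\eta_{f_i}-\eta_{f_{i+1}})=0$ as a telescoping sum, which dispatches what you flagged as the ``main obstacle'' in one line and avoids the sign bookkeeping you anticipated.
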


\begin{proof}
	Let~$L$ denote the linear subspace on the right-hand side. 
	
	We first show that~$L(D, \zerovec) \subseteq L$. Every~$x \in L(D, \zerovec)$ is of the form~$x = \sum_{f \in F} \fav_f \chi(f)$ for some coefficients~$\fav_f \in \R$.
	For every cycle~$H = (f_1^*, f_2^*, \dots, f_k^*)$ in the dual graph~$G^*$, we have
	\[\langle x, \xi(H) \rangle = \sum_{i = 1}^k (\fav_{f_i} - \fav_{f_{i+1}}) = 0,\]
	where~$f_{k+1} = f_1$. This shows that~$x \in L$, and hence~$L(D, \zerovec) \subseteq L$.
	
	It remains to show that~$\dim(L) \le \dim(L(D, \zerovec))$. Recall that~$\dim(L(D, \zerovec)) = |F| - 1$ and that the space of all circulations has dimension~$|F| - 1 + g$. With each constraint~$\langle x, \xi(C_i) \rangle = 0$ that we iteratively add to the space of all circulations, the dimension drops by one. Indeed, for each~$i \in [g]$ there is a (unique) cycle~$H_i$ in~$K \cup \{e_i\}$. For this~$H_i$, the circulation~$\chi(H_i)$ satisfies all constraints~$\langle \chi(H_i), \xi(C_j) \rangle = 0$, for~$j \ne i$, but~$\langle \chi(H_i), \xi(C_i) \rangle \ne 0$. This means that~$\dim(L) \le (|F| - 1 + g) - g = |F| - 1 = \dim(L(D, \zerovec))$.
\end{proof}

\begin{corollary}
	\label{cor:zhomologOrientableSurface}
	Let~$D = (V,A)$ be a digraph cellularly embedded in an orientable surface of Euler genus~$g$ and let~$y$ be an integer circulation in~$D$. 
	Then, we can efficiently compute cycles~$C_1, \dots, C_g$ in the dual graph of~$D$ such that the following holds:
	An integer circulation~$x \in \Z^A$ is~$\Z$-homologous to~$y$, if and only if,
	\[
	\langle x, \xi(C_i) \rangle = 	\langle y, \xi(C_i) \rangle  \qquad \textnormal{ for all } i \in [g].
	\]
\end{corollary}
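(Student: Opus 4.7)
The plan is to derive this corollary almost directly from Proposition~\ref{propDescriptionL(D,y)Orientable}, using as the only non-trivial extra ingredient the total unimodularity of the boundary matrix $\partial$ over orientable surfaces (as noted in Section~\ref{subsec:homology}). First I would compute the cycles $C_1,\dots,C_g$ by repeating the construction that precedes Proposition~\ref{propDescriptionL(D,y)Orientable}: pick any spanning tree $K$ of $G$, observe that $G^*\setminus K^*$ is still connected, pick a spanning tree $T^*$ of $G^*\setminus K^*$, and for each of the $g$ non-tree dual edges $e_i^*$ let $C_i$ be the unique cycle of $T^*\cup\{e_i^*\}$. All of this is clearly polynomial-time.

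For the \emph{only if} direction, assume $x$ is $\Z$-homologous to $y$, so that $x-y=\partial\eta$ for some $\eta\in\Z^F$. Then in particular $x-y\in L(D,\zerovec)$, and Proposition~\ref{propDescriptionL(D,y)Orientable} gives $\langle x-y,\xi(C_i)\rangle=0$ for all $i\in[g]$, which rearranges into the claimed equalities. Conversely, assume $x\in\Z^A$ is a circulation satisfying $\langle x,\xi(C_i)\rangle=\langle y,\xi(C_i)\rangle$ for every $i\in[g]$. Then $x-y$ is a circulation with $\langle x-y,\xi(C_i)\rangle=0$, so Proposition~\ref{propDescriptionL(D,y)Orientable} yields some \emph{real} $\eta\in\R^F$ with $x-y=\partial\eta$, i.e., $x$ is $\R$-homologous to $y$.

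The main -- and only genuine -- step is then to upgrade $\eta$ to an integer vector. I would do this by invoking that $\partial$ is totally unimodular over orientable surfaces: the Smith normal form of any TU integer matrix has all non-zero invariant factors equal to $1$, since they divide every subdeterminant and the latter all lie in $\{-1,0,1\}$. Consequently, every integer vector in the real column span of $\partial$ already lies in its integer column span, so there exists $\eta'\in\Z^F$ with $x-y=\partial\eta'$; this is exactly the definition of $x$ being $\Z$-homologous to $y$. The main obstacle I anticipate is simply recording this TU-to-integer-solvability step cleanly; once granted, the corollary is routine bookkeeping on top of Proposition~\ref{propDescriptionL(D,y)Orientable}.
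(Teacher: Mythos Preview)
Your proposal is correct and matches the paper's intended argument: the corollary is stated in the paper without proof, as an immediate consequence of Proposition~\ref{propDescriptionL(D,y)Orientable} together with the total unimodularity of $\partial$ discussed earlier in Section~\ref{secOrientable}. Your Smith-normal-form justification of the step ``integer vector in the real column span of a TU matrix lies in its integer column span'' is a perfectly clean way to make explicit what the paper leaves implicit.
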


\subsection{Non-orientable surfaces}
\label{secNonOrientable}

In this section, we provide a characterization of~$\Z$-homology that is exploited in Section~\ref{sec:fixedgenus_polytimealg}.
To this end, we first provide a characterization of~$\R$-homology similar to Corollary~\ref{cor:zhomologOrientableSurface} in the orientable case.
In the previous section we have seen that for an orientable surface two integer circulations are~$\Z$-homologous, if and only if, they are~$\R$-homologous.
Unfortunately, this is not true for non-orientable surfaces.
Therefore, some more arguments are required to obtain Theorem~\ref{thm:zHomologousNonOrientable}.

Let~$D$ be a digraph cellularly embedded in a non-orientable surface of Euler genus~$g$ according to an embedding scheme~$\Pi = (\pi, \lsignature)$. Let~$G^*$ be the dual graph canonically embedded in the same surface. In order to obtain a description as in Proposition~\ref{propDescriptionL(D,y)Orientable}, let us consider the following construction of~$g-1$ closed walks in~$G^*$. This construction is similar to the construction for embeddings in orientable surfaces described in Section~\ref{secOrientable}, which can already be found in~\cite{ChambersEN2012}.

Pick any spanning 1-tree~$T^*$ in~$G^*$ (a spanning tree with one additional edge forming one single cycle) whose cycle~$C^*$ is a one-sided cycle. Denote the set of arcs in~$D$ containing all arcs whose dual edges are in~$T^*$ by~$T$. Notice that in most cases $T$ is not a tree. Since~$C^*$ is a one-sided cycle,~$D \setminus T$ is still connected. Hence, there exists a spanning tree~$K$ in~$D \setminus T$. 

By Euler's formula~\eqref{eq:Euler} there exist exactly~$g-1$ arcs~$\barcs_1, \dots \barcs_{g-1}$ in~$D$ that are not contained in~$K \cup T$. For each~$i \in [g-1]$, we define~$W^*_i$ as the (dual) two-sided closed walk in~$T^* \cup \{ \barcs_i^* \}$: In case that~$T^* \cup \{ \barcs_i^* \}$ contains a two-sided cycle,~$W^*_i$ equals this cycle. Otherwise, $T^* \cup \{ \barcs_i^* \}$ contains two one-sided cycles $C^*$ and a cycle containing $ \barcs_i^* $. In this case~$W^*_i$ walks once along~$C^*$, along a path in $T^*$ towards the one-sided cycle containing~$\barcs_i^*$, along this cycle, and finally back to~$C^*$ on the same path.
For the remainder of this section, we keep~$T, T^*, K, C, C^*, \barcs_1, \dots, \barcs_{g-1}$ and~$W^*_1, \dots, W^*_{g-1}$ fixed. 

Notice that we defined the above-described walks in such a way that each walk uses an edge at most twice. 
These~$g-1$ closed walks will yield the constraints needed to describe~$\R$-homology. 

\begin{lemma}
	\label{le:zHomologWeakProperty}
    Let~$z \in \R^A$ be a circulation in~$D$ and let~$\fav \in \R^F$ be an assignment on the~$\Pi$-facial walks such that~$z(a) = \partial \fav (a)$ holds for all arcs~$a \in T \cup \{ \barcs_1, \dots \barcs_{g-1} \}$. Then~$z$ is~$\R$-homologous to the all-zeros circulation.
\end{lemma}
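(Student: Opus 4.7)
The plan is to set $z' := z - \partial\eta$ and show $z' = \zerovec$, which would exhibit $z$ as $\partial\eta$ and therefore $\R$-homologous to $\zerovec$. Since $z$ is a circulation and each column $\chi(f)$ of $\partial$ is a facial circulation, $z'$ is itself a circulation in $D$. By the hypothesis, $z'(a) = 0$ for every arc $a \in T \cup \{\barcs_1,\dots,\barcs_{g-1}\}$, so the support of $z'$ is contained in $A \setminus (T \cup \{\barcs_1,\dots,\barcs_{g-1}\})$.

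Next I would identify this remaining arc set with the spanning tree $K$. A direct count using Euler's formula~\eqref{eq:Euler} gives $|A| = |V| + |F| - 2 + g$, while $|T| = |F|$ (the spanning $1$-tree $T^*$ of $G^*$ has $|F|$ edges), $|K| = |V|-1$, and there are exactly $g-1$ arcs $\barcs_1,\dots,\barcs_{g-1}$ outside $K \cup T$. Hence $K$, $T$ and $\{\barcs_1,\dots,\barcs_{g-1}\}$ partition $A$, and the support of $z'$ is contained in the arc set of the spanning tree $K$.

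Finally, I would invoke the classical fact that a circulation supported on a tree must vanish. The argument is a leaf-peeling induction: at any leaf $v$ of $K$ only one arc $a \in K$ is incident to $v$; flow conservation at $v$ together with $z'(a') = 0$ for all other arcs $a'$ incident to $v$ (which lie in $T \cup \{\barcs_i\}$) forces $z'(a) = 0$. Removing $v$ from $K$ yields a smaller tree on which $z'$ is still a circulation supported on the remaining tree arcs, and iterating shows $z' \equiv 0$ on $K$, hence on all of $A$.

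No step here looks technically delicate; the only mild care is in verifying the partition of $A$ (which is a bookkeeping use of Euler's formula and the fact that $D \setminus T$ is connected, already established so that $K$ exists) and in formulating the leaf-peeling step cleanly. Once those are in place, $z = \partial\eta$ follows immediately, which is exactly the assertion that $z$ is $\R$-homologous to $\zerovec$.
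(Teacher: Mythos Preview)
Your proposal is correct and follows exactly the same approach as the paper: form $z' = z - \partial\eta$, observe it is a circulation supported on the spanning tree $K$, and conclude $z' = \zerovec$ since a circulation supported on an acyclic arc set must vanish. The paper's proof is simply a terser version of yours, asserting the partition of $A$ (already established in the setup via Euler's formula) and the tree fact without spelling out the leaf-peeling.
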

\begin{proof}
    As~$z(a) = \partial \fav (a)$ holds for all arcs~$a \in T \cup \{ \barcs_1, \dots \barcs_{g-1} \}$,~$z - \partial \fav$ is a circulation in~$D$ that is zero on all arcs that are not contained in the spanning tree~$K$. Since~$K$ does not contain any cycle, the circulation~$z - \partial \fav$ must be zero on all arcs in~$K$ as well. Therefore,~$z = \partial \fav$, which yields the claim.
\end{proof}

In what follows, we will identify linear equations that ensure the existence of a vector~$\fav \in \R^F$ such that~$z(a) = \partial \fav (a)$ for all~$a \in T \cup \{ \barcs_1, \dots, \barcs_{g-1} \}$. To this end, the following properties of the matrix~$\partial$ will be useful.
Recall that the rows in~$\partial$ correspond to the arcs in~$A$ and the columns in~$\partial$ correspond to the walks in~$F$. We have~$\partial_{a,f} \neq 0$ if walk~$f$ uses the underlying undirected edge corresponding to~$a$ once or twice in the same direction. Since there are one-to-one correspondences between arcs~$A$ and dual edges~$A^*$ and between walks~$F$ and dual nodes~$V^*$, respectively,~$\partial$ may be interpreted as a matrix in~$\Z^{A^* \times V^*}$. With this interpretation we have~$\partial_{a^*,f^*} \neq 0$ if edge~$a^*$ is no loop and incident to~$f^*$ in the dual graph~$G^*$ or if $ a^* $ is a loop at $ f^* $ with negative dual signature.

\begin{lemma}
    \label{lem:determinantproperty}
    Let~$\partial_C$ be the submatrix of~$\partial$, consisting only of the rows and columns that correspond to nodes and edges used in~$C^*$, respectively.
    Then, we have~$| \det(\partial_C) | = 2$.
    Moreover, the absolute value of the determinant of any submatrix of~$\partial_C$, obtained by deleting exactly one row and one column, equals~$1$.
\end{lemma}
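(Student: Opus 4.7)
The plan is to unpack $\partial_C$ as a signed cyclic incidence matrix of the cycle $C^*$ and then exploit the one-sidedness of $C^*$ to compute the determinants. Writing $C^*$ as the closed walk $f_1^*,e_1^*,f_2^*,\dots,f_k^*,e_k^*,f_1^*$, each dual edge $e_i^*$ is incident to exactly $f_i^*$ and $f_{i+1}^*$ (indices mod $k$), so after reordering rows and columns, row $i$ of $\partial_C$ has exactly two nonzero entries: some $s_i \in \{\pm 1\}$ in column $i$ and some $t_i \in \{\pm 1\}$ in column $i+1 \bmod k$. A key preparatory step, which I expect to be the most delicate, is to translate the definition of the dual signature into a relation between $s_i$ and $t_i$: since $\lambda^*(e_i^*) = +1$ precisely when the $\Pi$-facial walks $f_i$ and $f_{i+1}$ traverse the underlying edge of $e_i$ in opposite directions, the corresponding entries of $\partial$ carry opposite signs, giving $s_i t_i = -\lambda^*(e_i^*)$.

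Given this structure, for the first claim I would observe that in the Leibniz expansion of $\det(\partial_C)$ the only permutations yielding a nonzero term are the identity and the full cyclic shift $i \mapsto i+1 \bmod k$, so
\[
\det(\partial_C) \;=\; \prod_{i=1}^k s_i \;+\; (-1)^{k-1} \prod_{i=1}^k t_i.
\]
The one-sidedness of $C^*$ means that $\prod_{i=1}^k \lambda^*(e_i^*) = -1$, and combined with $s_i t_i = -\lambda^*(e_i^*)$ this forces $\prod_{i=1}^k s_i t_i = (-1)^{k+1}$. Since each factor is $\pm 1$, this implies $\prod_i s_i = (-1)^{k-1} \prod_i t_i$, so the two summands in the display above agree up to sign and $|\det(\partial_C)| = 2$.

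For the second claim I would note that deleting row $e_i^*$ leaves the signed vertex-edge incidence matrix of the path $C^* \setminus \{e_i^*\}$. Deleting an additional column $f_j^*$ then splits the remaining nonzero pattern into two triangular blocks, one for the portion of the path on each side of $f_j^*$, each carrying $\pm 1$ on its diagonal; hence the determinant equals $\pm 1$. The main obstacle throughout is getting the sign identity $s_i t_i = -\lambda^*(e_i^*)$ right from the definition of the dual embedding scheme; once that is in place, the remaining determinant computations are elementary.
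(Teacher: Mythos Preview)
Your proof is correct and follows essentially the same route as the paper's: both identify $\partial_C$ as the signed cyclic incidence matrix of $C^*$, establish the sign relation $s_i t_i = -\lambda^*(e_i^*)$, and evaluate the determinant using the one-sidedness condition $\prod_i \lambda^*(e_i^*)=-1$, with the $(k-1)\times(k-1)$ minors handled by a (block-)triangular argument. The only difference is cosmetic---the paper normalizes via row/column sign flips and then applies Laplace expansion, whereas you use the Leibniz expansion directly---and the paper separately treats the degenerate case where $C^*$ is a single loop ($k=1$), which your formula covers implicitly once you interpret the ``two entries'' $s_1,t_1$ as contributing to the same column.
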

\begin{proof}
	In case that $ C^* $ is just a loop, consisting of the edge $ e^* $, the signature of $ e^* $ needs to be negative, since $ C^* $ is one-sided. Therefore the corresponding arc in $ D $ is used twice into the same direction in the corresponding $\Pi$-facial walk. Therefore $\partial_C = 2$.
	
	In case that $ C^* $ is not a loop, all entries in $\partial$ are either $ \pm 1 $ or zero. Moreover, for any arc $ a^* $ and node $ f^* $ in $ C^* $, the entry $\partial_{a^*,f^*} \neq 0$ if edge~$a^*$ is incident to~$f^*$ in the dual graph~$G^*$.
    Let~$C^* = (f_1^*, e_1^*, f_2^*, \dots, e_{\ell-1}^*, f_{\ell}^*, e_{\ell}^*, f_1^*)$ with~$\lambda^*(e_1^*) \cdot \cdots \cdot \lambda^*(e_{\ell}^*) = -1$. Notice that in each row and column in~$\partial_C$, there are exactly two non-zero coefficients, each of them equals either~$+1$ or~$-1$. 
    In a row that corresponds to an arc~$a$, the two non-zero coefficients have the same sign, if and only if, the walks in~$F$ use the underlying edge of~$a$ in the same direction. This is the case if and only if the dual signature of the dual edge of~$a$ is negative. Since~$C^*$ is one-sided, the number of rows in~$\partial_C$ in which the two non-zero entries have the same sign is odd. Notice that resorting rows/columns or multiplying rows/columns by~$-1$ does not change the absolute value of~$\partial_C$'s determinant.
    Moreover, multiplying rows/columns by~$-1$ does not change the parity of the number of rows in which the two non-zero entries have the same sign. Hence, the absolute value of~$\partial_C$'s determinant may be calculated as follows:
    \[
    | \det(\partial_C) | = \left\vert \det
    \begin{psmallmatrix}
    1 & 1 &   &        &        &   \\
    & 1 & 1 &        &        &   \\
    &   & 1 &    1   &      &   \\
    &   &   & \ddots & \ddots &   \\
    &   &   &        &    1   & 1 \\ 
    a &   &   &        &        & b             
    \end{psmallmatrix}
    \right\vert = 
    | b + (-1)^{\ell +1} a | = 2,
    \]
    where~$a$,~$b \in \{-1, +1\}$, the second equality follows from Laplace's formula and the last equality is due to~$a b = (-1)^{\ell +1}$. This holds true because~$C^*$ is one-sided, hence the displayed matrix should contain an odd number of rows, in which the two non-zero entries have the same sign.
    
    The second assertion follows by a similar argumentation, ending up with a triangular matrix. 
\end{proof}

\begin{lemma}
    \label{le:uniqueAlphaAlongOnetree}
    Given a circulation~$z \in \R^A$ in~$D$, there is a unique~$\fav \in \R^F$ such that~$z(a) = \partial \fav (a)$ holds for all arcs~$a \in T$.
\end{lemma}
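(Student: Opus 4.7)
The key observation is that the system $z(a) = \partial \fav(a)$ for $a \in T$ is a \emph{square} linear system: since $T^*$ is a spanning 1-tree of $G^*$, we have $|T| = |T^*| = |V^*| = |F|$, so the submatrix $\partial_T \in \Z^{T \times F}$ (obtained from $\partial$ by restricting to rows in $T$) is square. Therefore, existence and uniqueness of $\fav$ for every $z$ are equivalent to $\partial_T$ being invertible, and the plan is to compute $|\det(\partial_T)| = 2$ by exposing a block-triangular structure.

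First I would split $T^*$ into the unique cycle $C^*$ and the forest $T^* \setminus C^*$, whose components are each rooted at a node of $C^*$. Order the columns of $\partial_T$ so that nodes of $V^* \setminus V(C^*)$ come first in reverse topological order (leaves of $T^* \setminus C^*$ first, then their parents, and so on), followed by the nodes of $V(C^*)$. Order the rows analogously: for every node $f^* \notin V(C^*)$ place the row corresponding to the (unique) edge of $T^* \setminus C^*$ joining $f^*$ to its parent, and then place the rows corresponding to the edges of $C^*$. With this ordering, $\partial_T$ takes the block form
\[
\partial_T = \begin{pmatrix} M & N \\ 0 & \partial_C \end{pmatrix},
\]
where $\partial_C$ is exactly the matrix of Lemma~\ref{lem:determinantproperty}.

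The bottom-left block vanishes because every edge of $C^*$ is incident only to nodes of $V(C^*)$. For the top-left block $M$, each row corresponds to a non-loop tree edge $a^*$ joining some $f^* \in V^* \setminus V(C^*)$ to its parent $p^*$; hence row $a$ has $\partial_{a,f} = \pm 1$ on the diagonal column $f^*$, and its only other nonzero entry sits at the column $p^*$, which by construction is placed \emph{later} than $f^*$ (whether $p^* \in V(C^*)$, contributing to $N$, or $p^* \notin V(C^*)$, contributing strictly above the diagonal in $M$). Thus $M$ is upper triangular with $\pm 1$ on the diagonal, so $|\det M| = 1$.

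Combining this with Lemma~\ref{lem:determinantproperty}, I conclude $|\det(\partial_T)| = |\det M| \cdot |\det \partial_C| = 1 \cdot 2 = 2 \neq 0$, so $\partial_T$ is invertible over $\R$ and the required $\fav \in \R^F$ exists and is unique. The only subtle step is verifying that the bottom-right block is genuinely $\partial_C$ (as defined in Lemma~\ref{lem:determinantproperty}) and not distorted by sign or loop issues; here the fact that $C^*$ was taken to be one-sided, together with the discussion of loop vs. non-loop dual edges preceding Lemma~\ref{lem:determinantproperty}, handles both the cycle-loop case and the generic case uniformly.
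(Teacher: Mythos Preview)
Your proof is correct and follows essentially the same idea as the paper's: both rely on Lemma~\ref{lem:determinantproperty} for the cycle block and on the tree structure of $T^*\setminus C^*$ for the remaining coordinates. The only difference is packaging---the paper first solves $z_C=\partial_C\fav_C$ and then propagates $\fav$ outward along the tree via $z(a)=s(a,h)\fav_h+s(a,g)\fav_g$, whereas you encode the same back-substitution as a block-triangular determinant computation (which has the mild bonus of yielding $|\det\partial_T|=2$ explicitly).
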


\begin{proof}
    For any~$\fav \in \R^F$, let~$\fav_C$ denote the restriction of~$\fav$ to the~$\Pi$-facial walks that correspond to nodes of~$C^*$.
    Moreover, let~$z_C$ denote the restriction of~$z$ to the arcs in~$C$.
    By Lemma~\ref{lem:determinantproperty}, the determinant of~$\partial_C$ equals~$\pm 2$. Hence,~$\partial_C$ is regular and~$\fav_C$ is uniquely determined by the values of~$z_C$ via the equation~$z_C = \partial_C \fav_C$.
    The remaining values of~$\fav$ are uniquely determined by extending~$\fav_C$ along the arcs in~$T$ to~$\fav \in \R^F$ such that~$z(a) = \partial \fav (a) = s(a,h)\fav_h + s(a,g)\fav_g$ for every arc~$a \in T$ whose underlying edge is used in the two walks~$h$,~$g \in F$.
\end{proof}

We are now ready to characterize circulations that are~$\R$-homologous to~$\zerovec$.

\begin{lemma}
    \label{le:rhomologNon-OrientableSurface}
    A circulation~$z \in \R^A$ is~$\R$-homologous to~$\zerovec$, if and only if, the following~$g-1$ linear equations hold:
    \[
    \langle z, \xi(W^*_i) \rangle = 0 \qquad \textnormal{ for all } i \in [g-1].
    \]
\end{lemma}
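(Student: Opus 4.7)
The plan is to prove the two directions separately. For the direction ($\Rightarrow$), I would prove the stronger statement that $\langle \chi(f), \xi(H^*) \rangle = 0$ for every $\Pi$-facial walk $f$ and every two-sided closed walk $H^*$ in $G^*$; linearity in $\fav$ then immediately gives $\langle \partial \fav, \xi(H^*) \rangle = 0$, and since each $W^*_i$ is two-sided by construction this yields the forward implication. For the direction ($\Leftarrow$), the strategy is a bookkeeping exercise: use Lemma~\ref{le:uniqueAlphaAlongOnetree} to pin down a candidate $\fav$ matching $z$ along $T$, then use the hypothesis $\langle z, \xi(W^*_i) \rangle = 0$ to upgrade this match to the remaining arcs $\barcs_i$, after which Lemma~\ref{le:zHomologWeakProperty} finishes the job.

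For the forward computation, I would write $H^* = (f^*_1, e^*_1, f^*_2, \ldots, e^*_\ell, f^*_{\ell+1})$ with $f^*_{\ell+1} = f^*_1$ and introduce the running sign $\sigma_i := \lambda^*(e^*_1) \cdots \lambda^*(e^*_{i-1})$. Expanding $\langle \chi(f), \xi(H^*) \rangle$ and exchanging summations, the $i$-th term survives only when $f \in \{f_i, f_{i+1}\}$. Using the defining property of the dual signature, namely $\chi(f_{i+1})(e_i) = -\lambda^*(e^*_i)\,\chi(f_i)(e_i)$ when $e^*_i$ is not a loop, the $i$-th summand contributes $\sigma_i$ when $f = f_i$ and $-\sigma_{i+1}$ when $f = f_{i+1}$. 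Dual loops need a brief separate check but contribute the same $\sigma_i - \sigma_{i+1}$ pattern: loops with $\lambda^*(e^*_i) = -1$ yield $2\sigma_i = \sigma_i - \sigma_{i+1}$, loops with $\lambda^*(e^*_i) = +1$ yield $0 = \sigma_i - \sigma_{i+1}$. The sum then telescopes to $[f = f_1]\,(1 - \sigma_{\ell+1})$, and $\sigma_{\ell+1} = +1$ is precisely the two-sidedness condition on $H^*$, giving zero.

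For the backward direction, given a circulation $z$ satisfying the $g-1$ equations, Lemma~\ref{le:uniqueAlphaAlongOnetree} provides a unique $\fav \in \R^F$ with $\partial \fav(a) = z(a)$ for every $a \in T$. Set $z' := z - \partial \fav$; this is a circulation vanishing on $T$, and the already-established forward direction applied to $\partial \fav$ gives $\langle z', \xi(W^*_i) \rangle = 0$. Since $W^*_i$ uses only dual edges from $T^* \cup \{\barcs^*_i\}$, the vector $\xi(W^*_i)$ is supported on $T \cup \{\barcs_i\}$, so the inner product collapses to the single term $z'(\barcs_i)\,\xi(W^*_i)(\barcs_i)$. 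The walk $W^*_i$ traverses $\barcs^*_i$ exactly once, hence $\xi(W^*_i)(\barcs_i) = \pm 1$, forcing $z'(\barcs_i) = 0$. Thus $z(a) = \partial \fav(a)$ for every $a \in T \cup \{\barcs_1, \ldots, \barcs_{g-1}\}$, and Lemma~\ref{le:zHomologWeakProperty} concludes that $z$ is $\R$-homologous to $\zerovec$.

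The main obstacle is getting the telescoping in the forward direction right: carefully aligning the sign $-\lambda^*(e^*_i)$ coming from the dual signature with the shift $\sigma_{i+1} = \lambda^*(e^*_i)\,\sigma_i$, and verifying that the pattern survives when $e^*_i$ is a loop (which arises precisely when a primal edge appears twice on the boundary of a single face, so that $\chi(f_i)(e_i)$ may take the value $\pm 2$). Once that bookkeeping is done, the backward direction requires no new ideas beyond the two existing lemmas plus the observation that $W^*_i$ and $\barcs_i$ meet in a single arc.
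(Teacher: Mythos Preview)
Your proof is correct and follows essentially the same route as the paper: both arguments hinge on the telescoping identity obtained by tracking the running sign product $\sigma_i = \lambda^*(e^*_1)\cdots\lambda^*(e^*_{i-1})$ along the walk, combined with Lemmas~\ref{le:zHomologWeakProperty} and~\ref{le:uniqueAlphaAlongOnetree}. The only organizational difference is that the paper does a single computation establishing the equivalence $\langle z,\xi(W^*_i)\rangle = 0 \iff z(\barcs_i) = \partial\fav(\barcs_i)$ for the specific $\fav$ from Lemma~\ref{le:uniqueAlphaAlongOnetree}, whereas you first isolate the cleaner general fact $\langle \chi(f),\xi(H^*)\rangle = 0$ for every facial $f$ and two-sided closed $H^*$, and then reuse it to reduce the backward direction to the observation that $\xi(W^*_i)(\barcs_i) = \pm 1$; this factoring is arguably tidier but not a different idea.
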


\begin{proof}
    By Lemma~\ref{le:uniqueAlphaAlongOnetree}, let~$\fav \in \R^F$ be the unique assignment vector, which satisfies~$z(a) = \partial \fav (a)$, for all arcs~$a \in T$. Now, by Lemma~\ref{le:zHomologWeakProperty}, the circulation~$z$ is~$\R$-homologous to~$\zerovec$, if and only if,~$z(\barcs_i) = \partial \fav (\barcs_i)$, for all~$\barcs_i \in D \setminus (K \cup T)$. We show that for any~$i \in [g-1]$ the equation~$\langle z, \xi(W^*_i) \rangle = 0$ is equivalent to~$z(\barcs_i) = \partial \fav (\barcs_i)$. 
    
     We think of $\langle z, \xi(W^*_i) \rangle $ as walking along $ W^*_i $ and whenever crossing an arc $ a $ of $ D $ adding the value $ z(a) $ with the appropriate sign. 
     Since $ W^*_i $ is two-sided, the appropriate sign may be interpreted as an indicator for the direction $ z(a) $ crosses $ W^*_i $.
     The crucial property (besides $ W^*_i $ being two-sided) leading to the claimed equivalence is that all arcs corresponding to $ W^*_i $ except $ \barcs_i $ belong to $ T $. Therefore, $ z $ restricted to these arcs can be written as the combination of facial circulations with coefficients $ \fav $. Hence, at these positions, the values added when entering or leaving a face in $ D $ while walking along $ W^*_i $ will cancel out. 
     As a consequence, $\langle z, \xi(W^*_i) \rangle = 0$ if and only if the value added (with the appropriate sign, since $ W^*_i $ is two-sided) at $ \barcs_i $ also cancels out. This happens if and only if $z(\barcs_i) = \partial \fav (\barcs_i)$.
    
    Formally, consider the following two relationships:
    \begin{align}
    \label{eq:partialAlphaAndS}
    \partial \fav (a) &= s(a,f)\fav_f + s(a,g)\fav_g, \\
    \label{eq:sAndDualSignature}
    s(a,f) s(a,g)& = -\lambda^*(a^*),
    \end{align}
    where the underlying edge of~$a$ appears in the~$\Pi$-facial walks~$f$ and~$g$ and the dual edge of~$a$ is denoted by~$a^*$. Since a cyclic shift does not change our considerations, we assume~$W^*_i$ to be the following walk in the dual graph:~$(f_1^*, e_1^*, f_2^*, \dots, e_{\ell-1}^*, f_\ell^*, e_\ell^* = \barcs_i^*, f_1^*)$. 
    
    We denote the corresponding arcs in~$D$ by~$e_1, \dots, e_{\ell-1}, e_\ell = \barcs_i$.
    By the definition of~$\xi(\cdot)$, we have
    \begin{align*}
    \langle z, \xi(W^*_i) \rangle 
    &= \sum\limits_{a \in A} \left( \sum_{\substack{j \in \{1, \dots, \ell\} \\ a = e_j}} \lambda^*(e_1^*)\cdots\lambda^*(e_{j-1}^*) s(a,f_j) \right) \cdot z(a)  \\  
    &= \sum\limits_{j=1}^{\ell}\lambda^*(e_1^*)\cdots\lambda^*(e_{j-1}^*) s(e_j,f_j) \cdot z(e_j). \\
    \intertext{All arcs in~$W_i$ except~$\barcs_i$ lie in~$T$. For those arcs, we assume that~$z(e_j)=\partial \fav (e_j)$. Hence,}
    \langle z, \xi(W^*_i) \rangle & = \sum\limits_{j=1}^{\ell-1}\lambda^*(e_1^*)\cdots\lambda^*(e_{j-1}^*) s(e_j,f_j) \cdot \partial \fav (e_j)\\
    & \qquad \qquad + \lambda^*(e_1^*)\cdots\lambda^*(e_{\ell-1}^*) s(e_\ell,f_\ell) \cdot z(\barcs_i) \\
    &= \sum\limits_{j=1}^{\ell-1}\lambda^*(e_1^*)\cdots\lambda^*(e_{j-1}^*) s(e_j,f_j) \cdot \left( \fav_{f_j} s(e_j,f_j) + \fav_{f_{j+1}} s(e_j,f_{j+1}) \right) \\
    & \qquad \qquad + \lambda^*(e_1^*)\cdots\lambda^*(e_{\ell-1}^*) s(e_\ell,f_\ell) \cdot z(\barcs_i),\\
    \intertext{using Equation~\eqref{eq:partialAlphaAndS}. Now, by Equation~\eqref{eq:sAndDualSignature} we have} 
    \langle z, \xi(W^*_i) \rangle &= \sum\limits_{j=1}^{\ell-1}  \lambda^*(e_1^*)\cdots\lambda^*(e_{j-1}^*)  \cdot \left( \fav_{f_{j}} - \lambda^*(e_{j}^*) \fav_{f_{j+1}} \right) \\
    & \qquad \qquad + \lambda^*(e_1^*)\cdots\lambda^*(e_{\ell-1}^*) s(e_\ell,f_\ell) \cdot z(\barcs_i)\\[1ex]
    & = \fav_{f_1} - \lambda^*(e_1^*)\cdots\lambda^*(e_{\ell-1}^*) \fav_{f_{\ell}} \\[1ex]
    & \qquad \qquad + \lambda^*(e_1^*)\cdots\lambda^*(e_{\ell-1}^*) s(e_\ell,f_\ell) \cdot z(\barcs_i).
    \end{align*}
    Therefore,~$\langle z, \xi(W^*_i) \rangle=0$, if and only if,
    \begin{align*}
    z(\barcs_i) &= - \frac{\fav_{f_1} - \lambda^*(e_1^*)\cdots\lambda^*(e_{\ell-1}^*) \fav_{f_{\ell}}}{\lambda^*(e_1^*)\cdots\lambda^*(e_{\ell-1}^*) s(e_\ell,f_\ell)} \\
    &= - \left(\fav_{f_1} - \lambda^*(e_1^*)\cdots\lambda^*(e_{\ell-1}^*) \fav_{f_{\ell}}\right) \cdot \left(\lambda^*(e_1^*)\cdots\lambda^*(e_{\ell-1}^*) s(e_\ell,f_\ell)\right) \\
    \intertext{because the denominator is either~$+1$ or~$-1$. Hence,}
    z(\barcs_i)&= \fav_{f_{\ell}} s(e_\ell,f_\ell)  -  \lambda^*(e_1^*)\cdots\lambda^*(e_{\ell-1}^*) \fav_{f_1} s(e_\ell,f_\ell). \\
    \intertext{Using Equation~\eqref{eq:sAndDualSignature} and that $ W_i^* $ is two-sided, we obtain,}
    z(\barcs_i)&= \fav_{f_{\ell}} s(e_\ell,f_\ell)  + \lambda^*(e_1^*)\cdots\lambda^*(e_{\ell}^*) \fav_{f_1} s(e_\ell,f_1)  \\
    &= \fav_{f_{\ell}} s(e_\ell,f_\ell)  + \fav_{f_1} s(e_\ell,f_1)  \\
    &= \partial \fav (\barcs_i),
\end{align*}
which concludes the proof.
\end{proof}

Notice that even if the given circulation~$z$ in~$D$ is integer, the vector~$\fav \in \R^F$ defined in Lemma~\ref{le:uniqueAlphaAlongOnetree} is not necessarily integer.
The following lemma yields a characterization when~$\fav$ can be chosen to be integer.

\begin{lemma}
    \label{le:Zhomolog}
    Given a circulation~$z \in \Z^A$ in~$D$ and~$\fav \in \R^F$ such that~$z(a) = \partial \fav (a)$ for all arcs~$a \in T$,~$\fav$ is integral, if and only if,~$\sum_{a \in C} z(a) \equiv 0 \pmod* 2$.
\end{lemma}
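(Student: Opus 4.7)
The plan is to reduce integrality of~$\fav$ to integrality of its restriction $\fav_C$ to the $\Pi$-facial walks corresponding to nodes of~$C^*$, and then to analyze $\fav_C$ via Cramer's rule using Lemma~\ref{lem:determinantproperty}.

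First, I would show that $\fav$ is integer if and only if $\fav_C$ is integer. The construction in the proof of Lemma~\ref{le:uniqueAlphaAlongOnetree} recovers the remaining coordinates of~$\fav$ by extending along the edges of $T^* \setminus C^*$, which form a forest attached to the nodes of~$C^*$ (any loop in $T^*$ would create a cycle distinct from~$C^*$, so the edges in $T^* \setminus C^*$ are non-loops). Processing them in BFS order from~$C^*$, each edge $e^* \in T^* \setminus C^*$ joins a dual node $g^*$, whose coefficient $\fav_g$ has already been determined, to a new dual node $h^*$. The equation $z(a) = s(a,g)\fav_g + s(a,h)\fav_h$ for the corresponding arc~$a$ has coefficients $s(a,g), s(a,h) \in \{\pm 1\}$ and therefore determines $\fav_h = s(a,h)(z(a) - s(a,g)\fav_g)$, which is integer whenever $\fav_g$ and $z(a)$ are. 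Thus integrality propagates from~$\fav_C$ to all of~$\fav$, and the converse is immediate.

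Next, I would apply Cramer's rule to the system $\partial_C \fav_C = z_C$. By Lemma~\ref{lem:determinantproperty}, $|\det(\partial_C)| = 2$, so each coordinate of~$\fav_C$ can be written as $\fav_{f_i} = \pm\tfrac{1}{2}\det(\partial_C^{(i)})$, where $\partial_C^{(i)}$ is the matrix obtained from $\partial_C$ by replacing its $i$-th column by $z_C$. Expanding along the replaced column yields
\[
    \det(\partial_C^{(i)}) = \sum_{a \in C} \varepsilon_{i,a}\, z(a), \qquad \varepsilon_{i,a} \in \{\pm 1\},
\]
since each cofactor is $\pm 1$ times the determinant of a submatrix of $\partial_C$ obtained by deleting one row and one column, which has absolute value $1$ by the second assertion of Lemma~\ref{lem:determinantproperty}. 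Reducing modulo~$2$ gives $\det(\partial_C^{(i)}) \equiv \sum_{a \in C} z(a) \pmod* 2$, independently of~$i$.

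Combining the two reductions, $\fav \in \Z^F$ if and only if every $\fav_{f_i}$ is integer, if and only if $\det(\partial_C^{(i)})$ is even for every~$i$, if and only if $\sum_{a \in C} z(a) \equiv 0 \pmod* 2$. The degenerate case in which $C^*$ is a single loop (whose dual signature must be~$-1$ for one-sidedness) is covered identically: $\partial_C = (\pm 2)$ and $\fav_f = \pm z(a)/2$ for the unique arc $a \in C$. I expect the only mildly delicate step to be confirming that the tree-extension preserves integrality, which boils down to the observation that the entries of $\partial$ along non-loop dual edges belong to $\{\pm 1\}$; everything else is a direct application of Lemma~\ref{lem:determinantproperty}.
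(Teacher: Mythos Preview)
Your proposal is correct and follows essentially the same approach as the paper: reduce integrality of~$\fav$ to that of~$\fav_C$ via propagation along the tree part of~$T^*$, then analyze~$\fav_C$ via Cramer's rule together with Lemma~\ref{lem:determinantproperty} and a Laplace expansion along the column~$z_C$. The only cosmetic difference is that the paper observes integrality propagates along \emph{all} arcs of~$T$ (including those on~$C^*$) and hence checks a single coordinate~$\fav_f$, whereas you check every coordinate of~$\fav_C$ and note that the resulting parity condition is the same for all of them; both variants are equally valid.
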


\begin{proof}
    Observe that integrality extends along the arcs in~$T$ via the relation~$z(a) = \partial \fav (a) = s(a,h)\fav_h + s(a,g)\fav_g$ for every arc~$a \in T$ whose underlying edge is used in the two walks~$h$,~$g \in F$.
    Exploiting this fact, we fix one~$\Pi$-facial walk~$f \in F$ that corresponds to a node in the dual cycle~$C^*$.
    Now, it is sufficient to prove that~$\fav_f \in \Z$ is equivalent to~$\sum_{a \in C} z(a)$ is even. 
    Using the same notation as in the proof of Lemma~\ref{le:uniqueAlphaAlongOnetree}, value~$\fav_f$ is uniquely defined by~$z_C = \partial_C \fav_C$. By Cramer's rule,~$\fav_f = \sfrac{\det(\partial_{C_f})}{\det(\partial_C)}$, where~$\partial_{C_f}$ denotes the matrix obtained by replacing the column in~$\partial_C$ corresponding to~$f$ by~$z_C$.
    By Lemma~\ref{lem:determinantproperty},~$|\det(\partial_C) | = 2 $ and deleting one row and one column in~$\partial_C$ results in a matrix that has determinant~$\pm 1$.
    Therefore, by Laplace's rule applied to the column that equals~$z_C$, we conclude that~$\det(\partial_{C_f}) = \sum_{a \in C} \pm 1 \cdot z(a)$. 
    Since the signs in the summation do not affect the parity,~$\det(\partial_{C_f})$ is even if and only if~$\sum_{a \in C} z(a)$ is even. It follows that~$\sum_{a \in C} z(a)$ is even if and only if~$\fav_f$ is integer. 
\end{proof}

Lemma~\ref{le:uniqueAlphaAlongOnetree}, Lemma~\ref{le:rhomologNon-OrientableSurface}, and Lemma~\ref{le:Zhomolog} now yield the following characterization of~$\Z$-homology.

\begin{proposition}
	\label{prop:zhomologNonOrientableSurface}
	Let~$D = (V,A)$ be a digraph cellularly embedded in an orientable surface of Euler genus~$g$ and~$y$ an integer circulation in~$D$. 
	We can efficiently compute a one-sided cycle~$C$ and two-sided closed walks~$W^*_1, \dots, W^*_{g-1}$ in the dual graph of~$D$ (which do not use an edge more than twice) such that the following holds: 
	An integer circulation~$x \in \Z^A$ is~$\Z$-homologous to~$y$, if and only if,
	\[
	\langle x, \xi(W^*_i) \rangle = \langle y, \xi(W^*_i) \rangle \qquad \textnormal{ for all } i \in [g-1].
	\]
	and
	\[
	\sum_{a \in C} x(a) \equiv \sum_{a \in C} y(a) \pmod* 2
	\]
\end{proposition}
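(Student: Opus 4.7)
The plan is to reduce the statement about $\Z$-homology between $x$ and $y$ to the corresponding statement for the difference $z := x - y$ and the zero circulation, and then assemble the three preceding lemmas. Note that $x$ is $\Z$-homologous to $y$ if and only if $z \in \Z^A$ is $\Z$-homologous to $\zerovec$, i.e.\ there exists $\fav \in \Z^F$ with $z = \partial \fav$. The combinatorial objects $T$, $T^*$, $K$, $C$, $C^*$, $\barcs_1,\dots,\barcs_{g-1}$, and $W_1^*,\dots,W_{g-1}^*$ have already been produced in this section; all of them arise from standard operations on $G^*$ and $D$ (finding a spanning tree, detecting a one-sided cycle via inspection of the dual signature, and then computing a spanning tree of $D\setminus T$), which can clearly be carried out in polynomial time. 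These fixed objects will serve throughout the argument.

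Next, I would invoke Lemma~\ref{le:uniqueAlphaAlongOnetree} to obtain the unique $\fav \in \R^F$ satisfying $z(a) = \partial \fav(a)$ for every $a \in T$. The crucial observation is that any real witness $\fav'$ of $\R$-homology $z = \partial \fav'$ necessarily satisfies the same equations on $T$, so by uniqueness $\fav' = \fav$. Consequently, $z$ is $\R$-homologous to $\zerovec$ precisely when this particular $\fav$ satisfies $z = \partial \fav$ on all of $A$. By Lemma~\ref{le:rhomologNon-OrientableSurface}, this is equivalent to the $g-1$ linear equations $\langle z, \xi(W_i^*) \rangle = 0$, which, after substituting $z = x - y$, yield the first family of conditions $\langle x, \xi(W_i^*) \rangle = \langle y, \xi(W_i^*) \rangle$ for $i \in [g-1]$.

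Once $\R$-homology is established, $\Z$-homology reduces to integrality of the unique $\fav$ from the previous step. Lemma~\ref{le:Zhomolog} translates this integrality into the parity condition $\sum_{a \in C} z(a) \equiv 0 \pmod*{2}$, which in terms of $x$ and $y$ is exactly $\sum_{a \in C} x(a) \equiv \sum_{a \in C} y(a) \pmod*{2}$. Combining both equivalences, a circulation $x \in \Z^A$ is $\Z$-homologous to $y$ if and only if the stated linear equations and the parity congruence both hold, which establishes the proposition.

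The point that requires the most care is aligning the three lemmas on the \emph{same} vector $\fav$: one has to recognize that the $\fav$ produced by Lemma~\ref{le:uniqueAlphaAlongOnetree} is the only candidate that could possibly certify $\R$-homology, so that Lemma~\ref{le:rhomologNon-OrientableSurface} (testing whether some real $\fav$ extends) and Lemma~\ref{le:Zhomolog} (testing whether this specific $\fav$ is integer) can be stacked coherently. Beyond this bookkeeping, the efficiency claim is routine, since all graph-theoretic primitives involved—spanning tree computation in $G^*$ and in $D\setminus T$, identification of a one-sided cycle via $\lambda^*$, and construction of the closed dual walks $W_i^*$—run in polynomial time.
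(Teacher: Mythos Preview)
Your proposal is correct and matches the paper's approach exactly: the paper simply states that Lemma~\ref{le:uniqueAlphaAlongOnetree}, Lemma~\ref{le:rhomologNon-OrientableSurface}, and Lemma~\ref{le:Zhomolog} together yield the proposition, and your write-up makes precisely this combination explicit, including the key bookkeeping point that the uniqueness in Lemma~\ref{le:uniqueAlphaAlongOnetree} forces any real (and hence any integer) witness of homology to coincide with the distinguished~$\fav$, so that the $\R$-homology test and the integrality test refer to the same vector.
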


It remains to observe that Theorem~\ref{thm:zHomologousNonOrientable} is a direct consequence of this Proposition.

\section{Open questions} \label{sec:openquestions}

In Problem~\ref{probMain} we assume that the given cost is non-negative and require that the minimum-cost circulation~$x$ is non-negative (in addition to being integer and~$\Z$-homologous to~$y$).
One natural generalization of this problem arises by allowing arbitrary cost and/or imposing arbitrary bounds on~$x$.
For orientable surfaces, adapting the formulation in~\eqref{eq:mainProblemIP} results again in a totally unimodular system, and the discussion in Section~\ref{secOrientable} directly extends to this more general setting.
However, our algorithm for non-orientable surfaces in Section~\ref{sec:fixedgenus_polytimealg} does neither cover arbitrary cost nor general bounds on~$x$.
In fact, we do not know whether the generalized problem is polynomially solvable on fixed non-orientable surfaces.
In particular, we do not know the complexity status of deciding feasibility in this case.

Another natural generalization of this problem arises by considering $b$-flows instead of just circulations.
Again, for orientable surfaces a formulation according to~\eqref{eq:mainProblemIP} results in a totally unimodular system.
Even for fixed non-orientable surfaces, we believe that our algorithm in Section~\ref{sec:fixedgenus_polytimealg} can be adopted to the setting of $b$-flows with a polynomial running time in the case of a constant number of nodes with non-zero demand or supply.
However, we are not aware of a polynomial-time algorithm for arbitrary $b$.

For a matrix~$A$, let~$\Delta(A)$ denote the largest absolute value of the determinant of any square submatrix of~$A$.
We remark that the generalized problem on a fixed surface of Euler genus~$g$ can be recast as an integer program of the form~$\min \{c^\intercal x : Ax \le b, \, x \geq 0, x \in \Z^n\}$, where~$A$ and~$b$ are integer with~$\Delta(A) = 2^g$.
Such a formulation is given in~\eqref{eq:mainProblemIP}.
Using the fact that the Euler genus of a non-orientable surface is equal to the largest number of pairwise disjoint one-sided simple closed curves and Lemma~\ref{lem:determinantproperty}, one can show that~$\Delta(\partial) = 2^g$.
Determining the complexity status of integer programs with~$\Delta(A) \le \mathrm{const}$ is an open problem, see, e.g.,~\cite{ArtmannWZ2017}.
Artmann, Weismantel, and Zenklusen~\cite{ArtmannWZ2017} obtained a polynomial-time algorithm for solving integer programs with~$\Delta(A) \le 2$.
Therefore, their findings imply a polynomial-time algorithm for the generalized problem on the projective plane.
However, we are not aware of an efficient algorithm for the case of the Klein bottle, which has Euler genus~$g=2$.

Finally, we remark that while the running time of our algorithm is polynomial for fixed Euler genus $g$, the degree of the polynomial depends on $g$ (see, e.g., the cardinality of $\Omega$ in Section~\ref{sec:fixedgenus_polytimealg}).
We do not know whether Problem~\ref{probMain} is fixed-parameter tractable in $g$.

\subsubsection*{Acknowledgments} We would like to thank Ulrich Bauer and Joseph Paat for valuable discussions on this topic, and the five anonymous reviewers for their careful inspection of our paper and their detailed comments.

\bibliographystyle{plain}
\bibliography{references}

\end{document}